\newtheorem{theorem}{Theorem}
\newtheorem{proposition}[theorem]{Proposition}
\newtheorem{definition}[theorem]{Definition}
\renewcommand{\eqref}[1]{Eq.~(\ref{#1})}
\newcommand{\figref}[1]{Fig.~\ref{#1}}
\newcommand{\secref}[1]{Sec.~\ref{#1}}
\renewcommand{\rank}{{\rm rank}}
\renewcommand{\deg}{{\rm deg}\hspace{2pt}}
\newcommand{\wt}{{\rm wt}\hspace{2pt}}
\def\NN{{\mathbb N}}
\def\ZZ{{\mathbb Z}}
\def\FF{{\mathbb F}}
\def\bmc{{\bm c}}
\def\bme{{\bm e}}
\def\bmo{{\bm 0}}
\def\bmr{{\bm r}}
\def\bms{{\bm s}}
\def\bmO{{\bm O}}
\def\C{{\mathcal C}}
\def\F{{\mathcal F}}
\def\H{{\mathcal H}}
\def\O{{\mathcal O}}
\def\P{{\mathcal P}}
\def\Q{{\mathcal Q}}
\def\S{{\mathcal S}}
\begin{document}

\title{Small Quantum Codes from Algebraic Extensions of Generalized Bicycle Codes}
\author{Nikolaos Koukoulekidis}
    \affiliation{IQM, Georg-Brauchle-Ring 23-25, 80992 Munich, Germany}
    \affiliation{Dept. of Electrical and Computer Engineering, Duke University, Durham, NC, USA}
    \affiliation{Dept. of Physics, Duke University, Durham, NC, USA}
    \email{nikolaos.koukoulekidis@duke.edu}
\author{Fedor \v{S}imkovic IV}
    \affiliation{IQM, Georg-Brauchle-Ring 23-25, 80992 Munich, Germany}
\author{Martin Leib}
    \affiliation{IQM, Georg-Brauchle-Ring 23-25, 80992 Munich, Germany}
\author{Francisco Revson F. Pereira}
    \affiliation{IQM, Georg-Brauchle-Ring 23-25, 80992 Munich, Germany}
    \email{francisco.revson@meetiqm.com}

\begin{abstract}
Quantum error correction is rapidly seeing first experimental implementations, but there is a significant gap between asymptotically optimal error-correcting codes and codes that are experimentally feasible.
Quantum LDPC codes range from the surface code, which has a vanishing encoding rate, to very promising codes with constant encoding rate and linear distance. 
In this work, motivated by current small-scale experimental quantum processing units, we devise small quantum codes that are inspired by a subset of quantum LDPC codes, known as generalized bicycle (GB) codes.
We introduce a code construction based on algebraic manipulation of the parity-check matrix of GB codes, rather than manipulation of Tanner graphs.
Our construction leads to families of quantum LDPC codes of small size, and we demonstrate numerically that their performance scales comparably to the performance of surface codes for similar sizes under a phenomenological noise model. 
The advantage of our code family is that they encode many logical qubits in one code, at the expense of non-local connectivity.
We then explore three variants of the code construction focusing on reducing the long-range connectivity by bringing it closer to the current experimental capabilities of short-range connectivity devices.
\end{abstract}
\maketitle

\section{Introduction}

Quantum information is prone to destructive errors while stored or manipulated.
The number of errors increases with the number of components in a quantum circuit.
Therefore, active quantum error correction (QEC) is deemed essential to realize scalable quantum computer architectures~\cite{Shor1995,terhal2015,devitt2013quantum}.
A QEC code encodes quantum information in a larger space in order to introduce redundancy, so that errors on few physical qubits do not alter logical information irreversibly.
Errors on the logical state are detected via a series of non-destructive, projective measurements, called parity checks, that yield syndrome bits.
These syndrome bits are processed via a decoder, tasked with determining the optimal recovery operation which restores the error-free logical state.
Therefore, apart from data qubits that encode computational information, one requires auxiliary, syndrome qubits to perform the parity checks.
The full sequence of syndrome measurements, syndrome decoding and recovery comprise a QEC cycle and requires substantial space-time computational overheads~\cite{aharonov1999,terhal2015}.

In classical coding theory, low density parity-check (LDPC) codes~\cite{gallager1963low} are codes with low overhead, as the number of data bits per parity-check and the number of parity-checks that a data bit participates in are both bounded by a constant, independently of the scaling of the code.
This is represented by a sparse parity-check matrix. 
A code can generally be defined via a parity-check matrix, where each row represents a syndrome bit and each column represents a data bit.
Equivalently, it can be represented by a bipartite ``factor'' graph~\cite{Tanner1981} which illustrates the connections as edges between vertices that represent syndrome and data bits.
LDPC codes form very successful error-correction protocols as they satisfy the sparsity condition and they also saturate upper bounds on the amount of information that can be reliably transferred through noisy channels~\cite{gallager1963low,mackay1997near}.
As a result, they constitute the industrial standard for many modern technologies such as the WiFi and 5G~\cite{bae2019overview,iliev2008application}.

It is natural to look for quantum analogs to LDPC codes, a task that recently has attracted a lot of attention, with many families of quantum LDPC codes~\cite{gottesman2014,Kovalev2013fault,Breuckmann2021} being developed.
There are different methods of constructing quantum LDPC codes.
One approach is to generate a quantum code with parity-check matrix $H$ via the hypergraph product construction~\cite{tillich2013quantum} which allows any classical code to be converted to a quantum code.
This approach permits construction of random quantum LDPC codes, which may generally possess favorable properties but also require unrestricted qubit connectivity and, crucially, there is no methodical way of designing larger codes from smaller ones, as their parity-check matrices are unrelated.

On the other hand, topological codes~\cite{dennis2002topological,freedman2001projective,Bombin2006}, such as the surface and color codes, are quantum LDPC codes that arise as the hypergraph product of classical repetition codes.
They possess some of the highest known noise thresholds, and are local in the sense that all parity checks can be performed via entangling gates between syndrome qubits and nearest-neighbor data qubits~\cite{fowler2012surface}.
However, they provide a vanishing logical qubit encoding rate as the code length increases.
Semi-topological codes~\cite{roffe_decoding_2020} have been constructed via a process that augments the edges on the factor graph of the underlying classical LDPC code that defines a given quantum LDPC code via the hypergraph product.
This allows for a trade-off between code threshold and parity-check locality.

Recently, quantum LDPC codes have demonstrated rapid success in the asymptotic regime. 
Good quantum LDPC codes of code length $n$ with a non-vanishing encoding rate and a minimum distance $d$ linear in $n$ have been discovered~\cite{panteleev2022asymptotically,leverrier2022quantum}.
It is well-known that these good properties come at the expense of a number of long-range interactions~\cite{Bravyi2010tradeoffs,Baspin2022,Tremblay2022}.
Significant progress has been made in providing remedies for the QEC cycle of quantum LDPC codes in terms of guaranteeing the existence of a syndrome-extraction circuit~\cite{delfosse2021bounds,pattison2023hierarchical} and fast decoders~\cite{leverrier2022decoding}.
However, these asymptotically good constructions do not yet guide us towards quantum LDPC codes with small code lengths.

Regarding the existence of quantum LDPC codes performing significantly better than the surface code under small code lengths $n<1000$, Panteleev and Kalachev showed that \emph{generalized bicycle (GB)} codes, initially introduced by Kovalev and Pryadko~\cite{Kovalev2013quantum}, can give quantum LDPC codes with such properties~\cite{Panteleev2021degenerate}. Quantum LDPC codes derived from GB codes preserve a repeated check structure under the expense of introducing non-local checks. Excitingly, recent papers have used GB codes to design quantum LDPC codes that can be implemented on superconducting qubits and atom arrays hardware~\cite{Tremblay2022,Xu2023,Bravyi2023,Viszlai2023}. In particular, Bravyi,~\emph{et al.}~\cite{Bravyi2023} showed the existence of quantum LDPC codes requiring data and syndrome qubit connectivity equal to six and having a Tanner graph with thickness two, meaning that it could be implemented using flip-chip technology with two planar layers of couplers and control lines on the chip hosting data and syndrome qubits. Lastly, 
it has been recently demonstrated by Viszlai,~\emph{et al.}~\cite{Viszlai2023} that the repeated check structure of GB codes matches with
the ability to move on $2D$ grids of qubits in atom arrays. Additionally, they consider a mixed architecture where quantum LDPC codes derived from GB codes are used as quantum memory and surface codes are used for computation. This reveals the strength of each of these two codes can be combined to obtain improved performance.

We are interested in small scale quantum LDPC codes.
To this end, we introduce a code construction based on manipulating the underlying algebraic structure of GB codes~\cite{mackay2004sparse,Kovalev2013quantum,Panteleev2021degenerate}.
Specifically, the parity-check matrix of GB codes is defined in terms of two circulant matrices and we develop algebraic transformations that extend these circulant matrices, creating a family of codes with increasing code length.
We present a family of quantum LDPC $[[n_m,k_m]]$ codes, indexed by $m \geq 1$, with small code lengths $n_m = 10m$ and dimension $k_m \geq 2$.
We demonstrate a code in this family with similar size to the surface code of distance 7, where size is counted as the sum of data and syndrome qubits, and similar threshold performance.
The advantage of this quantum LDPC code over the surface code is that it encodes 2 logical qubits instead of 1.
However, our quantum LDPC codes require longer-range connectivity as $m$ increases.

We then explore the experimental scalability of such code extensions, inspired by platforms which can support long-range connectivity at a small scale, such as architectures based on superconducting qubits and co-planar waveguide resonators~\cite{Algaba2022}, ion traps~\cite{monroe2013scaling}, photons~\cite{rudolph2017optimistic} or Rydberg atoms~\cite{morgado2021quantum,Bluvstein2023}.
In particular, we use our construction methodology to find a new family of quantum codes $[[n_m,k_m]]$, with $k_{m+1} \geq k_m$ subject to the scalability
property that the parity-check matrices of these codes consist of copies of a single small building block. The building block represents an arbitrary connectivity graph that is experimentally easy to obtain.
This property resembles the construction of topological codes, but we allow the building block to be derived by any GB code, hence the code dimension can be much larger than 1.
Therefore, our construction allows for a procedural design of chips which are composed of regions that locally shares the same connectivity structure of the smaller codes in the same family. It is worth noting that the chip may require some non-planar / long-range connectivity between the building blocks.

The rest of the manuscript is divided as follows.
In \secref{sec:prelim}, we introduce notation and preliminary definitions on the quantum codes considered.
We then introduce our algebraic construction in \secref{sec:construction}, and we obtain example of families of quantum LDPC codes as an application of this construction.
In \secref{sec:numerics}, we numerically investigate the performance of the smallest of the newly introduced quantum LDPC codes.
In \secref{sec:scalable}, we explore the experimental scalability of codes arising from our construction.
Finally, we conclude and discuss our work in \secref{sec:summary}.

\section{Background and preliminaries}
\label{sec:prelim}

\subsection{Quantum Stabilizer Codes}

Consider an $n$--qubit Hilbert space $\H \coloneqq \mathbb{C}^{2^n} \cong (\mathbb{C}^{2})^{\otimes n}$.
The set of $n$--qubit Pauli operators $\P_n$ consists of all operators $P = \alpha P_1 \otimes \dots \otimes P_n$, where $\alpha \in \{\pm 1, \pm i\}$ and $P_i \in \{I,X,Y,Z\}$ for $i=1,\dots,n$.
The weight $w$ of a Pauli operator $P$ is the number of non-identity components $P_i$ in the tensor product.
The Pauli group modulo the phase factor $\alpha$, i.e. $\P_n/\{\pm I, \pm iI\}$, is isomorphic to the commutative group $\ZZ_2^{2n}$ of $2n$--binary strings, with the isomorphism given by
\begin{eqnarray}\label{eq:pauli2binary}
    \bigotimes_{i=1}^n P_i &=& \bigotimes_{i=1}^n X^{x_i} Z^{z_i}\nonumber\\
    &\mapsto& (x_1, \dots, x_n | z_1, \dots, z_n) = (x|z).
\end{eqnarray}
It is then known that two Pauli operators $P,P' \in \P_n$ commute if and only if the following condition on their respective binary representations $(x,z), (x',z') \in \ZZ_2^{2n}$ is satisfied,
\begin{equation}\label{eq:commutativity2binary}
    [P,P']=0 \Leftrightarrow x \cdot z' + z \cdot x' = 0,
\end{equation}
where $a \cdot b = \sum_{i=1}^n a_i b_i$ is the inner product on $\ZZ_2^{2n}$.

An $[[n,k,d]]$ quantum stabilizer code $\Q$ is defined by a stabilizer group $\S \subseteq \P_n$, generated by $n-k$ elements $\langle S_1, \dots, S_{n-k} \rangle$, called parity checks, such that the code corresponds to the $2^k$--dimensional $(+1)$--eigenspace of $\S$,
\begin{equation}
    \Q \coloneqq \big\{\ket{\psi} \in \mathbb{C}^{2^n} | S_i \ket{\psi} = \ket{\psi}, i = 1, \dots, n-k \big\}.
\end{equation}
The minimum distance $d$ of the code indicates the number of physical errors required to cause a logical error. Formally, it is defined as the minimum weight of a Pauli operator $P$ such that $[P,Q] = 0$ for all $Q \in \S$, but $P \notin \S$. 
To identify the presence of an error, we measure the elements in a generating set of the stabilizer group. Observe that one can also measure $\ell\geq n-k$ elements of the stabilizer group (not necessarily independent) and use the redundant information to improve the decoding performance. Henceforth, we are going to consider the later case, where $\ell \geq n-k$ stabilizer measurements are performed.

Applying the isomorphic mapping of \eqref{eq:pauli2binary} to the stabilizer generators $S_1, \dots, S_{\ell}$, we obtain an $\ell \times 2n$ binary matrix $H = (H_X | H_Z)$, called the parity-check matrix of $\Q$, where the rows of the parity-check matrix correspond to the $\ell$ stabilizer generators.
According to \eqref{eq:commutativity2binary}, the nullspace of $H$ consists of vectors $(z|x)$ such that the representations $(x|z)$ correspond to all the Pauli operators that commute with the stabilizer generators.
In particular, all stabilizer generators need to commute with themselves, thus resulting in the commutativity condition $H_X H_Z^T + H_Z H_X^T = 0$ that any stabilizer code needs to satisfy.
By the rank-nullity theorem, the code dimension can be calculated as $k = n - \rank(H)$, since this is the number of linearly independent codewords that $H$ permits.

An important class of stabilizer codes is the CSS codes~\cite{Calderbank1996,steane1996}, defined by the property that for any stabilizer generator, the non-identity components in its tensor product representation are either all $X$ or all $Z$.
In practical terms, this ensures that $X$ and $Z$ parity check measurements can be performed independently during the error correction process.
This is reflected by the fact that the parity-check matrix can be represented as
\begin{align}
    H = \left( \begin{array}{c|c} H_X & 0 \\ 0 & H_Z \\ \end{array}\right),
\end{align}
where $H_X, H_Z$ are binary matrices with $n$ columns.
The commutativity condition for CSS codes therefore reads
\begin{align}\label{eq:commutativity}
    H_X H_Z^T = 0,
\end{align}
and the code dimension can be calculated as
\begin{align}
    k - \rank(H_X) - \rank(H_Z).
\end{align}

\subsection{Quantum LDPC codes}

A classical LDPC code $\C$ can be thought of as the nullspace of a sparse $(\ell \times n)$ parity-check matrix $H_{\C}$, so that every $n$--bit codeword $\bmc \in \C$ if and only if $H_{\C} \cdot \bmc^T = \bmo$.
The sparsity requirement for LDPC codes is reflected by a constant number of non-zero elements in the parity-check matrix as its size increases, i.e.~as one considers a family of codes with parity-check matrices $H_{\C}$ of increasing $\ell$ and $n$.
Constructing classical LDPC codes can be as easy as sampling random parity-check matrices.
In fact, this method gives classical codes with good asymptotic properties $k \in \Theta(n)$ and $d \in \Theta(n)$~\cite{richardson2008modern}.
However, this simple approach generally fails for quantum LDPC codes because the commutativity constraint of \eqref{eq:commutativity} fails with high probability.

Analogously, quantum LDPC (qLDPC) codes are stabilizer CSS codes with sparse parity-check matrices.
Hereon, we assume that $H_X$ and $H_Z$ are both $(\ell \times n)$ matrices, so $H$ is a $(2\ell \times 2n)$ parity-check matrix.
We first define the row and column Hamming weights for $H$, which will allow us to specify a sparsity condition.
The Hamming weight of row $i$, 
\begin{equation}
    w^{(i)}_{\rm r} \coloneqq \sum_{j=1}^{2n} H_{ij}
\end{equation}
indicates the number of qubits in the $i$-th parity check.
The Hamming weight of column $j$, 
\begin{equation}
    w^{(j)}_{\rm c} \coloneqq \sum_{i=1}^{2\ell} H_{ij}
\end{equation}
indicates the number of $X$ parity checks that the $j$-th qubit participates in if $1 \leq j \leq n$ and the number of $Z$ parity checks that the $(j-n)$-th qubit participates in if $n+1 \leq j \leq 2n$.
We set $w_{\rm r} \coloneqq \max_{i}{w^{(i)}_{\rm r}}$ and $w_{\rm c} \coloneqq \max_{j}{w^{(j)}_{\rm c}}$.

Given a family of quantum CSS codes $\F \coloneqq \{\Q_1, \Q_2,\dots\}$, we can identify a code $\Q_m$ as a \emph{$q(m)$--sparse code} if the density of qubits per parity check and parity checks per qubit scale at most as $q(m)$.
To formalize this, we recall that for two positive sequences $f$ and $g$, we write $f(n) = \O(g(n))$ if and only if there exist positive constants $c, n_0$ such that $f(n) \leq cg(n)$ for all integers $n \geq n_0$.
We then provide the following definition.
\begin{definition}\label{def:sparsity}
    Let $\F \coloneqq \{\Q_1, \Q_2,\dots\}$ be a family of quantum CSS codes, where $\Q_m$ is defined for $m=1,2,\dots$ via a $(2\ell_m \times 2n_m)$ parity-check matrix $H_m$, with maximum row weight $w_{\rm r}(m)$ and column weight $w_{\rm c}(m)$ expressed as functions of $m$, where no row or column consists of only zeros. Then, we say that $\F$ is a family of \emph{$q(m)$--sparse} quantum codes if
    \begin{enumerate}
        \item $\frac{w_{\rm r}(m)}{n_m} \leq q(m)$, and
        \item $\frac{w_{\rm c}(m)}{\ell_m} \leq q(m)$.
    \end{enumerate}
    
    If $q(m) = \O(\alpha^m)$ for some constant $\alpha\in(0,1)$, we say that every code in $\F$ has \emph{exponentially decaying density}.
    
    If $q(m)n_m \leq t$ and $q(m)\ell_m \leq t$ for some positive constant $t$, we say that every code in $\F$ is a $t$--qLDPC (or simply qLDPC) code.
\end{definition}

This definition is dependent on the representation of the parity-check matrices $H_m$ for $m \geq 1$.
One such representation is the standard form of the parity-check matrix, e.g. obtained by the algorithm in Chapter 9.4 of Ref.~\cite{djordjevic2021quantum}, where the matrix is full-rank and $\ell_m = n_m - k_m$.
However, in general, the parity-check matrix may be instructed by the experimental set-up, so we keep our definition general by introducing an arbitrary $\ell_m \geq n_m - k_m$. 
We then require that matrices $H_m$ do not contain any rows or columns of $H_m$ consisting of zeros, so that the sparsity of the matrices is not increased artificially.

For qLDPC codes, conditions $q(m)\ell_m \leq t$ and $q(m)n_m \leq t$ for constant $t>0$ correspond to the statements that every parity check acts on a constant number of qubits and every qubit participates in a constant number of parity checks.
Note that a $t$--qLDPC family $\F$ of codes with $n_m$ and $\ell_m$ that scale exponentially like $\O(\beta^m)$, for some $\beta > 0$, is itself a $\O(t/\beta^m)$--sparse code.
Therefore, qLDPC codes can be thought as codes with exponentially decaying density.
In Section~\ref{sec:construction}, we illustrate a construction (Algorithm~\ref{alg:gb_construction}) that produces new qLDPC codes and we then see in Section~\ref{sec:threshold} that they exhibit improved numerical performance with increasing code length. In Section~\ref{sec:scalable}, we see that sparsity has to be sacrificed in order to achieve scalability as defined later in Definition~\ref{def:scalability}.
Using our construction, we obtain a scalable family of $\O(\alpha^m)$--sparse codes, where $\alpha = 2/3$.
This family of codes with exponentially decaying density is still sparser than many known codes, such as the Shor code which scales as $\O(1/m)$, as we shown in Appendix~\ref{app:sparsity_shor_code}.

\subsection{Generalized Bicycle Codes}

Here, we lay out a construction provided in Ref.~\cite{Panteleev2021degenerate} for the generalized bicycle (GB) quantum codes~\cite{Kovalev2013quantum,mackay2004sparse}.
This ansatz forms the basis of our construction, and in particular our scalable family of small qLDPC codes.

Consider two binary $\ell \times \ell$ matrices $A$ and $B$ which commute, i.e. $[A,B]=0$.
Then, setting $H_X \coloneqq (A|B)$ and $H_Z \coloneqq (B^T|A^T)$ leads to a valid parity-check matrix $H$ as $H_X H_Z^T = AB + BA = [A,B] = 0$.
The obtained code $[[n,k]]$ has code length $n=2\ell$.
The special case of the original bicycle codes~\cite{mackay2004sparse} is obtained when $B = A^T$.

Random matrices $A$ and $B$ satisfy the commutativity condition with low probability, but fortunately there is a construction method which ensures the condition.
Let $\FF_2^{\langle \ell \rangle} \coloneqq \FF_2[x]/(x^\ell - 1)$ denote the ring of polynomials with binary coefficients modulo $x^\ell - 1$.
These are the polynomials 
\begin{equation}
    p(x) = p_0 + p_1 x + \dots + p_{\ell-1} x^{\ell-1},
\end{equation}
with coefficients $p_i \in \FF_2$ for $i=0,\dots,\ell-1$ and maximum degree $\ell-1$.
The weight $\wt p(x)$ of a polynomial $p(x)$ is the number of its non-zero coefficients, and ${\rm gcd}(\cdot)$ denotes the greatest common polynomial divisor of its arguments.

The ring $\FF_2^{\langle \ell \rangle}$ is isomorphic to the ring of $\ell \times \ell$ binary circulant matrices, such that a polynomial $p(x)$ is isomorphic to binary circulant matrix $P$, denoted by 
\begin{equation}
    p(x) \sim P,
\end{equation}
where
\begin{equation}\label{eq:circulant}
    P = 
    \begin{pmatrix}
        p_0 & p_{\ell-1} & \dots & p_1 \\ 
        p_1 & p_0 & \dots & p_2 \\
        \dots & \dots & \dots & \dots \\
        p_{\ell-1} & p_{\ell-2} & \dots & p_0
    \end{pmatrix}.
\end{equation}
To make the isomorphism transparent, we can express the circulant matrix as
\begin{equation}
    P =  p_0 I + p_1 X + \dots + p_{\ell-1} X^{\ell-1},
\end{equation}
where $I$ is the $\ell \times \ell$ identity and $X \coloneqq \sum_{k=0}^{\ell-1}\ketbra{k+1}{k}$ is the generalized Pauli $X$ operator in $\ell$ dimensions and $k+1$ denotes addition modulo $\ell$.

For any two circulant matrices $A$ and $B$ represented by polynomials $a(x)$ and $b(x)$, respectively, their product $AB$ is represented by the polynomial $a(x)b(x) \pmod{x^\ell - 1}$.
Therefore, the commutativity of $A$ and $B$, hence also the commutativity condition on the parity-check matrix $H$ defined by $A$ and $B$, is a consequence of the commutativity of polynomial multiplication over the ring $\FF_2^{\langle \ell \rangle}$.

Using this construction to obtain a GB code $[[2\ell,k]]$ with parity-check matrix $H$, it can be proven~\cite{Panteleev2021degenerate} that the code dimension $k$ is related to algebraic properties of the specific polynomials $a(x), b(x)$ that define $H$.

\begin{proposition}[GB code dimension~\cite{Panteleev2021degenerate}]
\label{prop:k}
    The dimension of a $GB$ code $[[2\ell,k]]$ defined by $a(x), b(x) \in \FF_2^{\langle \ell \rangle}$ is given by
    \begin{equation}
        k = 2\hspace{2pt}\deg g(x),
    \end{equation}
    where $g(x) \coloneqq {\rm gcd}\big(a(x), b(x), x^{\ell}-1 \big)$.
\end{proposition}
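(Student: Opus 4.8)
The plan is to reduce the claim to the CSS dimension formula $k = n - \rank(H_X) - \rank(H_Z)$ with $n = 2\ell$, and to evaluate each rank using the ring structure of circulant matrices. Throughout I would write $R \coloneqq \FF_2^{\langle\ell\rangle} = \FF_2[x]/(x^\ell-1)$ and identify $\FF_2^\ell$ with $R$, so that the circulant matrix $A$ (resp.\ $B$) acts as multiplication by $a(x)$ (resp.\ $b(x)$), and $g(x) \mid x^\ell-1$ by construction.

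First I would compute $\rank(H_X)$. Viewing $H_X = (A\,|\,B)$ as the linear map $\FF_2^{2\ell}\to\FF_2^\ell$, $(u\,|\,v)\mapsto Au + Bv$, its image is the set $a(x)R + b(x)R$, i.e.\ the ideal of $R$ generated by $a(x)$ and $b(x)$. Lifting to the principal ideal domain $\FF_2[x]$, this ideal is the reduction mod $x^\ell-1$ of $\bigl(a(x),b(x),x^\ell-1\bigr) = \bigl(g(x)\bigr)$, so the image of $H_X$ equals $g(x)R$. The key computation is then $\dim_{\FF_2} g(x)R = \ell - \deg g(x)$: writing $x^\ell-1 = g(x)h(x)$, the kernel of the surjection $\FF_2[x]\to g(x)R$, $q(x)\mapsto g(x)q(x)\bmod(x^\ell-1)$, is exactly $\bigl(h(x)\bigr)$, whence $g(x)R\cong\FF_2[x]/\bigl(h(x)\bigr)$ has $\FF_2$-dimension $\deg h(x) = \ell - \deg g(x)$. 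Thus $\rank(H_X) = \ell - \deg g(x)$.

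It remains to show $\rank(H_Z) = \rank(H_X)$. Since $X^T = X^{-1}$, the transpose of the circulant matrix associated to $p(x)$ is the circulant matrix associated to $p(x^{-1})$; hence $H_Z = (B^T\,|\,A^T)$ has image $\sigma\bigl(b(x)\bigr)R + \sigma\bigl(a(x)\bigr)R = \sigma\bigl(a(x)R + b(x)R\bigr) = \sigma\bigl(g(x)R\bigr)$, where $\sigma$ is the $\FF_2$-algebra automorphism of $R$ induced by $x\mapsto x^{\ell-1}$ (well defined since $(x^{\ell-1})^\ell = 1$ in $R$, and an involution). As $\sigma$ is in particular an $\FF_2$-linear bijection, $\rank(H_Z) = \dim_{\FF_2}\sigma\bigl(g(x)R\bigr) = \dim_{\FF_2} g(x)R = \rank(H_X)$. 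Substituting into the CSS formula gives $k = 2\ell - 2\bigl(\ell - \deg g(x)\bigr) = 2\deg g(x)$, as claimed.

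The identifications above are routine; the one point that needs care is that $x^\ell-1$ must appear inside the gcd, because the image of $H_X$ lives in $R$ rather than in $\FF_2[x]$, so an ideal of $\FF_2[x]$ that is improper can still give a proper ideal of $R$ — and, relatedly, one must track that transposition corresponds to $x\mapsto x^{-1}$ in order to conclude that the $Z$-block contributes the same rank as the $X$-block.
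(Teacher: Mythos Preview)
Your proof is correct. Each step checks out: the identification of the image of $H_X$ with the ideal $a(x)R+b(x)R=g(x)R$ in $R=\FF_2[x]/(x^\ell-1)$, the dimension count $\dim_{\FF_2} g(x)R=\ell-\deg g(x)$ via the factorization $x^\ell-1=g(x)h(x)$, and the treatment of $H_Z$ through the involution $\sigma\colon x\mapsto x^{-1}$ are all sound.

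However, there is nothing to compare against in the paper: Proposition~\ref{prop:k} is stated without proof and attributed to Ref.~\cite{Panteleev2021degenerate}. The paper only \emph{uses} this formula (e.g.\ in Theorems~\ref{thm:guarantee} and~\ref{thm:guarantee2}); it does not supply its own argument. Your write-up is essentially the standard proof one finds in the cited source, phrased cleanly in terms of the principal-ideal structure of $\FF_2[x]$ and the automorphism induced by transposition of circulants.
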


If one wishes to increase $\ell$, there is no obvious way to relate the properties of the new, larger code to the corresponding properties of the original code.
This is because, in general, new polynomials $a(x), b(x)$ of higher degree need to be constructed.
For example, the code dimension depends on the greatest common divisor of the polynomials which is a complicated function that does not simply depend on their degree.
Similar complications arise when one tries to relate the parity-check matrices of the codes.
In Section~\ref{sec:scalable}, we therefore present a simple methodology for obtaining a scalable family of GB codes.

\section{Algebraic extension of GB Codes}
\label{sec:construction}

In this section, we introduce a methodical extension for GB codes, that produces a family of qLDPC codes from an initial GB code ansatz which we call the \emph{base code}. 
We call the new codes extended GB codes, or simply \emph{extended codes}, as their generating polynomials are extensions of the generating polynomials for the base code.
The GB codes within this family are therefore related in the sense that one can obtain the full parity-check matrix of one code from smaller codes.

\subsection{Construction methodology}
\label{sec:algorithm}

The core idea of our construction lies in defining a sequence of algebraic transformations on the generating polynomials of the base code in order to obtain extended codes with related parity-check matrices. 
The algebraic transformations we allow convert the generating polynomials $a(x), b(x) \in \FF_2^{\langle \ell \rangle}$ of the base code into new polynomials $a'(x) = p(x)a(x) \in \FF_2^{\langle \ell' \rangle}$ and $b'(x) = p(x)b(x) \in \FF_2^{\langle \ell' \rangle}$ for some choice of $p(x)$ and for some $\ell'$ which is chosen as a positive integer larger than $\ell$.
We are thus sequentially enlarging the code lengths and the size of the parity-check matrices for the extended codes.
The new parameters $\ell',a'(x), b'(x)$ are described by $\ell, a(x), b(x)$ on their respective rings, allowing for a guarantee that the dimension of all codes in the family are lower bounded by the dimension of the base code. 
We describe the full method that constructs families of extended GB codes in Algorithm~\ref{alg:gb_construction}.

\begin{algorithm}[H]
\caption{Extended GB code construction}
\label{alg:gb_construction}
\begin{algorithmic}[1]
\Statex \textbf{Inputs:} A base code $\Q$ with generating polynomials $a(x), b(x) \in \FF_2^{\langle \ell \rangle}$;
The number $M$ of codes;
\Statex An increasing integer sequence $(\kappa_1, \dots, \kappa_M)$ with $\kappa_1 = 1$;
\Statex A sequence of polynomials $\big(p^{(1)}(x), \dots, p^{(M)}(x) \big)$, where $p^{(1)}(x) = 1$ and $p^{(m)}(x) \in \FF_2^{\langle (\kappa_m-1)\ell+1 \rangle}$ for $m=2, \dots, M$.
\vspace{5pt}
\State Initialize $\F \gets \{\}$.
\For{$m=1,\dots,M$} 
    \State Set $a^{(m)}(x) = p^{(m)}(x) a(x) \in \FF_2^{\langle \kappa_m \ell \rangle}$ and 
    \Statex \hspace{17pt} $b^{(m)}(x) = p^{(m)}(x) b(x) \in \FF_2^{\langle \kappa_m \ell \rangle}$.
    \vspace{1pt}
    \State Construct $(\kappa_m\ell \times \kappa_m\ell)$ circulant matrices 
    \Statex \hspace{17pt} $A_m \sim a^{(m)}(x)$ and $B_m \sim b^{(m)}(x)$.
    \vspace{1pt}
    \State Construct code $\Q_m$ via the parity-check matrix 
    \Statex \hspace{17pt} $H_m \coloneqq \begin{pmatrix} (A_m|B_m) & 0 \\ 0 & (B_m^T|A_m^T) \end{pmatrix}$.
    \vspace{1pt}
    \State Add $\Q_m$ to $\F$.
\EndFor
\vspace{5pt}
\Statex \textbf{Output:} A family of codes $\F \coloneqq \{ \Q_1, \dots, \Q_M \}$.
\end{algorithmic}
\end{algorithm}

The input requirements $\kappa_1 = 1$ and $p^{(1)}(x) = 1$ ensure that the base code $\Q$ is added in $\F$ as the first and smallest code $\Q_1$.
Moreover, each code $\Q_m \in \F$ is a $[[n_m, k_m]]$ code with code length that is increasing in $m$, as $n_m = 2\ell_m = 2\kappa_m\ell$.
Finally, $k_m = 2\hspace{1pt}\deg g^{(m)}(x)$, where $g^{(m)}(x) \coloneqq \gcd(a^{(m)}(x), b^{(m)}(x), x^{\ell_m} - 1)$ and $k_m \geq k_1$ as proved in Theorem~\ref{thm:guarantee}.

Each parity check of $\Q_m$ that appears as a row in the full parity-check matrix $H_m$ has weight equal to the sum of Hamming weights of $A_m$ and $B_m$, or equivalently
\begin{equation}\label{eq:alg1_wr}
    w_{\rm r}(m) = \wt a^{(m)}(x) + \wt b^{(m)}(x),
\end{equation}
and each qubit appears in 
\begin{equation}\label{eq:alg1_wc}
    w_{\rm c}(m) = \max\{\wt a^{(m)}(x), \wt b^{(m)}(x)\},
\end{equation}
parity checks, where $w_{\rm c}(m) < w_{\rm r}(m)$ always.

One can consider the asymptotic limit $M\rightarrow\infty$ to study the sparsity of the parity-check matrices.
We can immediately see from \eqref{eq:alg1_wr} and \eqref{eq:alg1_wc} that Algorithm~\ref{alg:gb_construction} produces a family of qLDPC codes if and only if the polynomials $p^{(m)}(x)$ do not increase the weight of the generating polynomials, i.e~$w_r(m) \leq q$ for all $m$ and some constant positive integer $q$.
This can be achieved by careful choice of $a(x), b(x)$ or of $p^{(m)}(x)$.

A simple way to achieve the LDPC property for arbitrary $a(x), b(x)$ is by setting $p^{(m)}(x) = 1$ for all $m$.
We then obtain a family of codes $\F_{\rm qLDPC} = \{\Q_1, \Q_2, \dots \}$, where $\Q_m$ is defined by polynomials $a(x), b(x) \in \FF_2^{\langle \kappa_m\ell \rangle}$ for $m = 1, 2, \dots$, which belong to rings of increasing dimension.
Therefore, the circulant matrices $A_m, B_m$ are all generated by polynomials $a(x), b(x)$, but they are different as $m$ increases.
In particular, the sequence $(\kappa_m)_{m=1,2,\dots}$ specifies how many additional zero elements $A_m$ and $B_m$ contain compared to $A_{m-1}$ and $B_{m-1}$, respectively.
We can see that $w_{\rm r}(m) = \wt a(x) + \wt b(x)$ for all $m = 1, 2, \dots$, which is independent of $m$, so $\F_{\rm qLDPC}$ forms indeed a family of qLDPC codes.

\subsection{Dimension and distance guarantees}

We now proceed by deriving a guarantee on the dimension of any family $\F$ that can be obtained by Algorithm~\ref{alg:gb_construction}.

\begin{theorem}\label{thm:guarantee}
    Suppose $\F$ is a code family obtained by Algorithm~\ref{alg:gb_construction}, consisting of $[[n_m, k_m, d_m]]$ codes $\Q_m \in \F$, where $m = 1, \dots, M$. Then, each code $\Q_m \in \F$ satisfies $k_m \geq k_1$.
\end{theorem}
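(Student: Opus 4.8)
The plan is to use Proposition~\ref{prop:k}, which tells us that for any GB code defined by polynomials $\alpha(x), \beta(x) \in \FF_2^{\langle \ell \rangle}$ the dimension equals $2\deg\gcd(\alpha(x), \beta(x), x^\ell - 1)$. Applying this to $\Q_m$, we have $k_m = 2\deg g^{(m)}(x)$ with $g^{(m)}(x) = \gcd\big(p^{(m)}(x)a(x),\, p^{(m)}(x)b(x),\, x^{\ell_m}-1\big)$, where $\ell_m = \kappa_m\ell$, and $k_1 = 2\deg g(x)$ with $g(x) = \gcd(a(x), b(x), x^\ell - 1)$. Since $k_1 = k_1$ always (the $m=1$ case is trivial because $\kappa_1 = 1$ and $p^{(1)}(x) = 1$), it suffices to show $\deg g^{(m)}(x) \geq \deg g(x)$ for all $m \geq 2$. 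First I would reduce the problem to exhibiting a single polynomial that divides all three arguments of $g^{(m)}$ and has degree at least $\deg g(x)$; the natural candidate is $g(x)$ itself, or more precisely an image of $g(x)$ under the ring inclusion.

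The key structural observation is that there is a natural ring homomorphism relating $\FF_2^{\langle \ell \rangle}$ and $\FF_2^{\langle \kappa_m\ell \rangle}$: since $x^\ell - 1$ divides $x^{\kappa_m\ell} - 1$ over $\FF_2$, every polynomial identity modulo $x^\ell - 1$ lifts to one modulo $x^{\kappa_m\ell} - 1$ in a controlled way. Concretely, because $g(x) \mid a(x)$, $g(x) \mid b(x)$ and $g(x) \mid x^\ell - 1$ as genuine polynomial divisibilities in $\FF_2[x]$ (choosing the representatives of degree $< \ell$), I would argue that $g(x)$ divides $p^{(m)}(x)a(x)$ and $p^{(m)}(x)b(x)$ in $\FF_2[x]$, and that $g(x) \mid x^\ell - 1 \mid x^{\kappa_m\ell} - 1$. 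The subtlety is that the three arguments of $g^{(m)}(x)$ live in $\FF_2^{\langle \kappa_m\ell \rangle}$, i.e.\ they are only defined modulo $x^{\kappa_m\ell} - 1$; but since $p^{(m)}(x) \in \FF_2^{\langle (\kappa_m-1)\ell + 1\rangle}$ has degree at most $(\kappa_m-1)\ell$, the products $p^{(m)}(x)a(x)$ and $p^{(m)}(x)b(x)$ have degree at most $(\kappa_m-1)\ell + (\ell - 1) = \kappa_m\ell - 1 < \kappa_m\ell$, so no reduction modulo $x^{\kappa_m\ell}-1$ actually occurs and these are honest polynomials in $\FF_2[x]$. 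Hence $g(x)$ is a common divisor of the three arguments, so $g(x) \mid g^{(m)}(x)$ and therefore $\deg g^{(m)}(x) \geq \deg g(x)$, giving $k_m \geq k_1$.

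The main obstacle I anticipate is the bookkeeping around which ring each polynomial lives in and making sure the divisibility statements are literal polynomial divisibilities rather than divisibilities in a quotient ring (where "$g \mid h$" can fail to behave well). The degree count in the previous paragraph — showing that $\deg\big(p^{(m)}(x)a(x)\big) < \kappa_m\ell$ so that passing to $\FF_2^{\langle \kappa_m\ell\rangle}$ is transparent — is exactly the point that makes the argument go through, and it is worth stating carefully that this is why the hypothesis $p^{(m)}(x) \in \FF_2^{\langle (\kappa_m-1)\ell+1\rangle}$ (rather than a larger ring) appears in Algorithm~\ref{alg:gb_construction}. A secondary point to check is that $g(x)$ is nonzero and that its degree is well-defined as the degree of the canonical representative, so that the inequality $\deg g^{(m)} \geq \deg g$ is meaningful; this follows since $x^\ell - 1 \neq 0$ is always among the arguments of the gcd. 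Once these are pinned down, the rest is immediate from Proposition~\ref{prop:k}.
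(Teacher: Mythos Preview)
Your proof is correct and slightly more direct than the paper's. Both arguments rest on the same two facts: the dimension formula $k_m = 2\deg g^{(m)}(x)$ from Proposition~\ref{prop:k}, and the divisibility $x^\ell - 1 \mid x^{\kappa_m\ell}-1$ in $\FF_2[x]$. The paper, however, proceeds by first collapsing $\gcd(a^{(m)},b^{(m)}) = p^{(m)}(x)\,\phi(x)$ with $\phi(x) = \gcd(a(x),b(x))$, then factoring $x^{\kappa_m\ell}-1 = (x^\ell-1)\sum_i x^{i\ell}$ and twice invoking the auxiliary inequality $\deg\gcd(p, qs) \geq \deg\gcd(p,q)$ (stated separately as Proposition~\ref{thm:gcd_inequality}). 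You instead show in one stroke that $g(x) = \gcd(a,b,x^\ell-1)$ divides each of $p^{(m)}a$, $p^{(m)}b$, and $x^{\kappa_m\ell}-1$ in $\FF_2[x]$, hence divides $g^{(m)}(x)$. Your route avoids the auxiliary lemma entirely and makes the role of the degree bound on $p^{(m)}$ (ensuring no reduction modulo $x^{\kappa_m\ell}-1$) more transparent; the paper's route, on the other hand, packages the argument in a form that is reused verbatim in the proof of Theorem~\ref{thm:scalable} and sets up the refined equality in Theorem~\ref{thm:guarantee2}.
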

\begin{proof}
    Let $a(x), b(x)$ be the polynomials that define a base code $\Q_1$ according to Algorithm~\ref{alg:gb_construction}.
    For any $[[2\kappa_m\ell, k_m]]$ code $\Q_m \in \F$, the code dimension is given by
    \begin{align}
        k_m &= 2\hspace{2pt}\deg \gcd\left(a^{(m)}(x), b^{(m)}(x), x^{\kappa_m\ell} - 1 \right) \nonumber\\
        &= 2\hspace{2pt}\deg \gcd\left( \gcd(a^{(m)}(x), b^{(m)}(x)), x^{\kappa_m\ell} - 1 \right) \nonumber\\
        &= 2\hspace{2pt}\deg \gcd\left( p^{(m)}(x)\phi(x), x^{\kappa_m\ell} - 1 \right),
    \end{align}where we have set $\phi(x) \coloneqq \gcd(a(x), b(x))$.
    All equalities follow from our definitions in Algorithm~\ref{alg:gb_construction}.
    We now note that the term $x^{\kappa_m\ell} - 1$ factors in the following way,
    \begin{align}
        x^{\kappa_m\ell} - 1 = (x^{\ell} - 1)\sum_{i=0}^{\kappa_m-1} x^{i\ell},
    \end{align}
    where all polynomials are defined over the ring $\FF_2^{\langle \kappa_m\ell \rangle}.$
    Finally, we recall the fact that for any polynomials $p(x), s(x), q(x) \in \FF_n^{\langle \kappa_m\ell \rangle}$, such that $\deg s(x) q(x) < \kappa_m\ell$, 
    \begin{align}
        \deg \gcd\big(p(x), q(x)s(x) \big) \geq \deg \gcd\big( p(x), q(x) \big),
    \end{align}which is applicable in this scenario as we state in Proposition~\ref{thm:gcd_inequality} of Appendix~\ref{app:prelim}.
    We can use this fact twice to bound the degree of the greatest common divisor,
    \begin{align}
        k_m &= 2\hspace{2pt}\deg \gcd\left( p^{(m)}(x)\phi(x), (x^{\ell} - 1)\sum_{i=0}^{\kappa_m-1} x^{i\ell} \right) \nonumber\\
        &\geq 2\hspace{2pt}\deg \gcd\left( \phi(x), (x^{\ell} - 1)\sum_{i=0}^{\kappa_m-1} x^{i\ell} \right) \nonumber\\
        &\geq 2\hspace{2pt}\deg \gcd\big( \phi(x), x^{\ell} - 1 \big) \nonumber\\
        &= k_1,
    \end{align}
    concluding the proof.
\end{proof}

The following theorem describes an explicit formula for the dimension of the $m$-th parity-check matrix under an additional constraint on the generating polynomials.

\begin{theorem}\label{thm:guarantee2}
    Suppose $\F$ is a code family obtained by Algorithm~\ref{alg:gb_construction}, consisting of $[[n_m, k_m, d_m]]$ codes $\Q_m \in \F$, where $m = 1, \dots, M$ and $p^{(m)}(x),\gcd(a(x), b(x))$ are coprime. Then, each code $\Q_m \in \F$ satisfies
    \begin{widetext}
    \begin{eqnarray}
        k_m = k_1 + 2(\deg \gcd(p^{(m)}(x),x^\ell - 1)+ \deg\gcd(p^{(m)}(x), \sum_{i=0}^{\kappa_m - 1}x^{il}) + \deg\gcd(a(x), b(x), \sum_{i=0}^{\kappa_m - 1}x^{il}) ).\nonumber
    \end{eqnarray}
    \end{widetext}
\end{theorem}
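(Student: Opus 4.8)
The plan is to reduce $k_m$ to a degree-of-gcd computation exactly as in the proof of Theorem~\ref{thm:guarantee}, but now to track the degree \emph{exactly} rather than bounding it, using the coprimality hypothesis. Starting from $k_m = 2\deg\gcd\big(p^{(m)}(x)\phi(x),\, (x^\ell-1)\sigma_m(x)\big)$, where I write $\phi(x)\coloneqq\gcd(a(x),b(x))$ and $\sigma_m(x)\coloneqq\sum_{i=0}^{\kappa_m-1}x^{i\ell}$, the first step is to argue that, because $p^{(m)}(x)$ and $\phi(x)$ are coprime, the gcd splits multiplicatively: $\gcd\big(p^{(m)}\phi,\,(x^\ell-1)\sigma_m\big) = \gcd\big(p^{(m)},(x^\ell-1)\sigma_m\big)\cdot\gcd\big(\phi,(x^\ell-1)\sigma_m\big)$. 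The clean way to do this is to pass to $\FF_2[x]$ (lifting representatives of degree $<\kappa_m\ell$, which is legitimate here since all the relevant products have degree $<\kappa_m\ell$, so reduction mod $x^{\kappa_m\ell}-1$ is transparent) and use unique factorization: coprimality of $p^{(m)}$ and $\phi$ means their irreducible factors are disjoint, so the common factors with $(x^\ell-1)\sigma_m$ partition between the two. Taking degrees then gives $k_m = 2\deg\gcd\big(p^{(m)},(x^\ell-1)\sigma_m\big) + 2\deg\gcd\big(\phi,(x^\ell-1)\sigma_m\big)$.

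Next I would handle each of the two terms by a further splitting. For the $\phi$ term, I want $\deg\gcd\big(\phi,(x^\ell-1)\sigma_m\big) = \deg\gcd(\phi,x^\ell-1) + \deg\gcd(\phi,\sigma_m)$; the factor $\deg\gcd(\phi,x^\ell-1)$ is precisely $k_1/2$ by Proposition~\ref{prop:k}, so this term contributes $k_1 + 2\deg\gcd(a(x),b(x),\sigma_m(x))$, matching the last two pieces of the claimed formula (after rewriting $\gcd(\phi,\sigma_m)=\gcd(a,b,\sigma_m)$). For the $p^{(m)}$ term I want $\deg\gcd\big(p^{(m)},(x^\ell-1)\sigma_m\big)=\deg\gcd(p^{(m)},x^\ell-1)+\deg\gcd(p^{(m)},\sigma_m)$, giving the first two pieces. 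Both of these splittings require the same kind of argument: one needs that $x^\ell-1$ and $\sigma_m(x)$ share no common factor, i.e. $\gcd(x^\ell-1,\sigma_m(x))=1$ in $\FF_2[x]$, so that a common divisor of $p^{(m)}$ (resp. $\phi$) with the product factors uniquely into its part dividing $x^\ell-1$ and its part dividing $\sigma_m$.

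Establishing $\gcd(x^\ell-1,\sigma_m(x))=1$ over $\FF_2$ is the step I expect to be the main obstacle — or at least the one needing genuine care. Over a field of characteristic $0$ this is false in general ($\sigma_m$ and $x^\ell-1$ share the primitive $\kappa_m\ell$-th-root factors when $\kappa_m$ and $\ell$ are not coprime), so the argument must use characteristic $2$ and is presumably where the hypothesis really bites; I would look at it via $\sigma_m(x)=\dfrac{x^{\kappa_m\ell}-1}{x^\ell-1}$ and the factorization $x^{\kappa_m\ell}-1 = (x^\ell-1)^{?}\cdots$ when $\kappa_m$ is a power of $2$, using $x^{2^j\ell}-1=(x^\ell-1)^{2^j}$ over $\FF_2$, so that $\sigma_m=(x^\ell-1)^{\kappa_m-1}$ and the gcd is certainly not $1$ unless $\ell=1$. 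This suggests the theorem as literally stated needs $\kappa_m$ to satisfy an extra hypothesis (such as $\kappa_m$ being coprime to $2$, or $\gcd(\kappa_m,\ell)$ being a power of $2$), or that the intended reading is the weaker inequality ``$\geq$'' of Theorem~\ref{thm:guarantee} refined under coprimality of $p^{(m)}$ and $\phi$ only. I would therefore first pin down the precise hidden assumption on $(\kappa_m)$ under which $\gcd(x^\ell-1,\sigma_m)=1$, state it, and then the rest of the proof is the routine multiplicativity-of-gcd bookkeeping sketched above, closing by collecting the four degree terms and invoking Proposition~\ref{prop:k} for the $k_1$ contribution.
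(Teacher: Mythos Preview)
Your approach is essentially the paper's: both arguments reduce $k_m$ to $2\deg\gcd\big(p^{(m)}\phi,\,(x^\ell-1)\sigma_m\big)$ and then apply the multiplicative identity $\gcd(\alpha\beta,\gamma)=\gcd(\alpha,\gamma)\gcd(\beta,\gamma)$ (for $\alpha,\beta$ coprime) twice --- once using the hypothesis $\gcd(p^{(m)},\phi)=1$, once using $\gcd(x^\ell-1,\sigma_m)=1$. The only cosmetic difference is the order: the paper first splits the second argument $(x^\ell-1)\sigma_m$ and then the first argument $p^{(m)}\phi$, whereas you do the reverse.

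The concern you raise about $\gcd(x^\ell-1,\sigma_m)=1$ is legitimate, and the paper does not resolve it: it simply invokes ``the fact that $(x^{\ell}-1),\sum_{i=0}^{\kappa_m-1}x^{i\ell}$ are coprime'' without comment. Your counterexample is correct --- over $\FF_2$ with $\kappa_m=2$ one has $\sigma_2=1+x^\ell=x^\ell-1$, so the two polynomials coincide --- and more generally the claim fails whenever $\kappa_m$ is even. So the hidden hypothesis you suspected (e.g.\ $\kappa_m$ odd) is genuinely needed; the paper's proof carries the same gap you identified and does not supply the missing assumption.
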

\begin{proof}
    Let $a(x), b(x)$ be the polynomials that define a base code $\Q_1$ according to Algorithm~\ref{alg:gb_construction}.
    Following the reasoning of the proof of the previous theorem, the code dimension of any code $\Q_m \in \F$ is $k_m = 2\hspace{2pt}\deg \gcd\left( p^{(m)}(x)\phi(x), x^{\kappa_m\ell} - 1 \right)$, where we have set $\phi(x) \coloneqq \gcd(a(x), b(x))$, and $x^{\kappa_m\ell} - 1 = (x^{\ell} - 1)\sum_{i=0}^{\kappa_m-1} x^{i\ell},$
    where all polynomials are defined over the ring $\FF_2^{\langle \kappa_m\ell \rangle}.$
    Finally, we recall the fact that for polynomials $p(x), s(x), q(x) \in \FF_n^{\langle \kappa_m\ell \rangle}$, where $q(x)$ and $s(x)$ are coprime, $\gcd(p(x), q(x)s(x)) = \gcd(p(x), q(x))\gcd(p(x), s(x))$ \cite{niederreiter2009finitefields}.
    Thus, we can conclude that
    \begin{widetext}
    \begin{eqnarray}
        k_m &=& 2\hspace{2pt}\deg \gcd\left( p^{(m)}(x)\phi(x), (x^{\ell} - 1)\sum_{i=0}^{\kappa_m-1} x^{i\ell} \right) \nonumber\\
        &=& 2\hspace{2pt}\deg\left[ \gcd\left(p^{(m)}(x)\phi(x), x^{\ell} - 1\right)\cdot
        \gcd\left( p^{(m)}(x)\phi(x), \sum_{i=0}^{\kappa_m-1} x^{i\ell} \right)\right]\nonumber\\
        &=& 2\hspace{2pt}\deg \gcd\left(p^{(m)}(x), x^{\ell} - 1\right)+ 2\hspace{2pt}\deg\gcd\left(\phi(x), x^{\ell} - 1\right)\nonumber\\
        &+& 2\hspace{2pt}\deg\gcd\left(p^{(m)}(x), \sum_{i=0}^{\kappa_m-1} x^{i\ell} \right)+ 2\hspace{2pt}\deg\gcd\left(\phi(x), \sum_{i=0}^{\kappa_m-1} x^{i\ell} \right)\nonumber\\
    \end{eqnarray}
    \end{widetext}where we have used in the first equality the fact that $(x^{\ell} - 1), \sum_{i=0}^{\kappa_m-1} x^{i\ell}$ are coprime, and the assumption that $p^{(m)}(x),\gcd(a(x), b(x))$ are coprime in the second equality. Since $2\deg\gcd (\phi(x),x^{\ell} - 1)=k_1,$ the result follows.
\end{proof}

The requirement that $\ell_m$ be an integer multiple of $\ell$ is crucial, otherwise there is no guarantee that the code dimension $k_m$ is lower bounded by $k_1$.
In fact, $k_m$ can otherwise be 0, as we show in Appendix~\ref{app:kguarantee} by an explicit example.

\section{Threshold analysis for extended codes}
\label{sec:numerics}

In this section, we present numerically obtained thresholds for small qLDPC codes obtained by the construction of Algorithm~\ref{alg:gb_construction}.

\subsection{Decoding via belief propagation}

In order to obtain the threshold values for a family of codes $\F$, we run several noise simulations so that for each run an error string $\bme$ occurs, leading to a syndrome
\begin{equation}\label{eq:syndrome}
    \bms = H \cdot \bme.
\end{equation}
The goal of a decoder is to obtain the most likely error string $\bme$ that satisfies this syndrome \eqref{eq:syndrome}. 
Decoding for our simulations was performed using the $BP+OSD$ decoder~\cite{roffe_decoding_2020} which we implemented via the \texttt{bposd} Python package~\cite{Roffe_LDPC_Python_tools_2022}.

\emph{Belief Propagation (BP)}~\cite{gallager1963low,Kschischang2001,mackay1995good} is the most frequently used decoder for classical LDPC codes~\cite{gallager1963low}.
It iteratively updates the probability distribution $P(\bme|\bms)$ of individual bits in the codeword.
In the classical setting, BP aims to find the exact error string $\bme$ that satisfies the syndrome equation (\eqref{eq:syndrome}), by updating the error string $\bme \mapsto \bme'$ in a sequence of ``beliefs'', with the $i$-th bit given by
\begin{equation}
    e_i \rightarrow
    \begin{cases} 
        1 &\text{ if }\hspace{5pt} p(e_i) \geq 0.5 \,, \\ 
        0 &\text{ if }\hspace{5pt} p(e_i) < 0.5 \,,
    \end{cases}
\end{equation}
where the conditional probability of an error occurring on the $i$-th bit given the syndrome $\bms$ equals to $p(e_i) = \sum_{e_j \neq e_i} P(e_1,\ldots,e_{i-1},1,e_{i+1},\ldots,e_n|\bms)$.

In the quantum setting, there may be multiple minimum weight estimates of the error for a given syndrome, due to the stabilizer encoding, a phenomenon known as \emph{quantum degeneracy}.
It is then sufficient to find a recovery operation $\bmr$ that corrects the error $\bme$ up to a stabilizer, so that $\bmr + \bme \in {\rm rowspace}(H)$.
However, BP fails to account for quantum degeneracy as it assigns high probabilities to all minimum weight error estimates, thus failing to converge.

\emph{Ordered Statistics Decoding (OSD)}~\cite{fossorier1995soft,Panteleev2021degenerate} is a post-processing algorithm supplementing BP with success for various families of qLDPC codes~\cite{Panteleev2021degenerate,roffe_decoding_2020}.
Even though
belief propagation may fail to provide a hard decision
on the error estimate, it always outputs marginal probabilities for the error estimates that satisfy the syndrome
equation. The idea behind OSD is to then use these probabilities (soft decisions) in order to select syndrome bits
with a high probability of having flipped. An OSD algorithm of order 0, denoted by OSD-–0, selects the columns $h=(\bm{h}_i)_{i=1,\ldots,\text{rank}(H)}$ of the parity-check matrix that correspond to the highest probabilities and obtains an estimate $h^{-1}\bm{s}$. In general, the OSD algorithm can be of order $w$, indicating a greedy search over increasingly more columns, improving decoding performance. 
Full details on the $BP+OSD$ protocol and its implementation can be found in Refs.~\cite{Panteleev2021degenerate,roffe_decoding_2020}.


In the simulations we present in this section, we consider a phenomenological model consisting of unbiased depolarizing noise, i.e. a single-qubit Pauli flip $\rho \mapsto P\rho P$ occurs with constant probability $p$ independently of the choice of Pauli $P \in \{X,Y,Z\}$.
We use the `min-sum' variant of BP~\cite{emran2014simplified} (as described in detail in Appendix C of Ref~\cite{roffe_decoding_2020}) with scaling factor $0.625$ and iteration depth 40.
The OSD order is set to $w = \ell$, where $\ell$ is the dimension of the polynomial ring of the base code.
We run $5 \cdot 10^4$ simulations, where each simulation is terminated at a precision of $10^{-3}$, a value much lower than any obtained threshold values.

\subsection{Choosing a base code}
\label{sec:basecode}

We would like to generate various families $\F_{\rm qLDPC}$ of qLDPC codes, according to the construction laid out in \secref{sec:algorithm}.
Our main motivation is to discover families of qLDPC codes which display improved performance with increasing code length, while keeping the code lengths as small as possible.

\begin{figure}[t]
    \centering 
    \includegraphics[width=1\columnwidth]{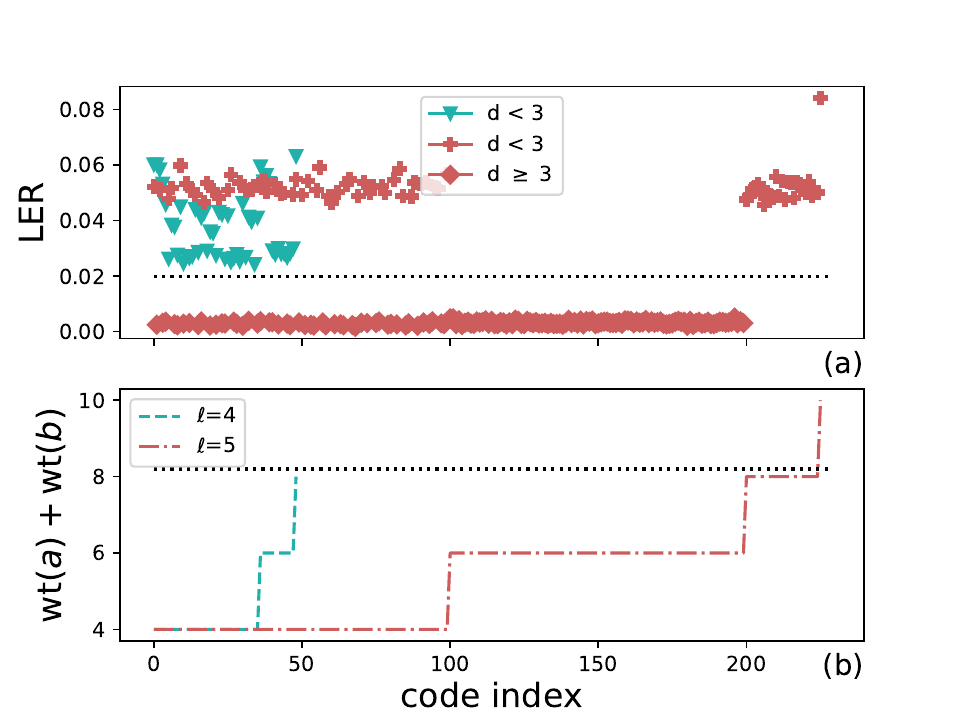}
    \caption{ {\bf Small GB base codes.}
    \textbf{(a)} Logical error rate (LER) of GB codes at ${\rm PER} = 0.010$.
    We include all GB codes with non-zero dimension at $\ell = 4$ and $\ell = 5$ (49 and 226 in total respectively).
    Codes are marked with triangles which are pointing down if the distance is less than 3 and pointing up if the distance is at least 3.
    It is clear that we need $\ell \geq 5$ to achieve distance 3, and only codes with distance 3 produce a low LER (below the dashed line at 0.020).
    \textbf{(b)} Maximum number of qubits participating in each parity check, equal to the sum of the polynomial weights $q = \wt(a) +\wt(b)$.
    The codes are ordered in increasing $q$, and we cap the weight of the codes we consider at 8 (dashed line).
    As $\ell$ increases beyond 5, this restricts the fraction of codes we consider more.
    }
    \label{fig:distance_three}
\end{figure}

In order to construct a particular family $\F_{\rm qLDPC}$, we first need to identify a suitable GB base code $\Q_1$.
To this end, we search for polynomials $a(x),b(x)$ over the ring $\FF_2^{\langle \ell \rangle}$.
By construction, the generated code is guaranteed to satisfy the commutativity condition in \eqref{eq:commutativity}.
However, we need to ensure that polynomials $a(x),b(x)$ satisfy the additional property that $\deg g(x) > 0$, where $g(x) \coloneqq {\rm gcd}\big(a(x), b(x), x^{\ell}-1 \big)$, otherwise the code has dimension 0 according to Proposition~\ref{prop:k}.

For our numerical investigations in \secref{sec:threshold}, we perform an exhaustive search over all possible pairs of polynomials $a(x), b(x) \in \FF_2^{\langle \ell \rangle}$ for $\ell$ up to 10.
We discard codes for which $\deg g(x) = 0$ and we find that $\ell = 5$ is the smallest value that produces error-correcting codes with distance $d \geq 3$.
To illustrate this, we present in \figref{fig:distance_three} all codes that satisfy $\deg g(x) > 0$ for $\ell=4$ and $5$.
In particular, we plot the logical error rate (LER) at physical error rate (PER) equal to $0.010$.
Codes with distance less that 3 are marked with triangles pointing down for $\ell=4$ and crosses for $\ell=5$, while codes with distance 3 are marked with diamonds.
We can see that all codes of distance 3 correspond to $\ell=5$ and display a low LER, below the dashed threshold line drawn at ${\rm LER} = 0.020$, while all codes with distance less than 3 (with $\ell = 4$ or $5$) produce very high LER, creating a considerable gap from the well-behaving distance 3 codes.
Therefore, the smallest GB codes one can consider have code length 10.

Since calculation of distance soon becomes costly, we extrapolate this dashed threshold line to higher $\ell$. 
Hence, given a particular $\ell$, we consider all codes for which $\deg g(x) > 0$ and ${\rm LER} < 0.020$ at ${\rm PER} = 0.010$.
Due to the interest in realizing such codes experimentally, we further limit the sum of polynomial weights $t = \wt a(x) + \wt b(x)$ by 8, so we obtain $t$--qLDPC codes with $t \leq 8$.
This is more restrictive at values of $\ell$ higher than 5, while still allowing for a very large fraction of available codes.
Out of those, we get many examples of codes with $\ell = 5, \dots, 10$ that display improved performance as the code length increases according to the qLDPC construction of Algorithm~\ref{alg:gb_construction}.
We discuss one example of the smallest obtainable family in the following section, and we give examples for all $\ell \leq 10$ in Appendix~\ref{app:numerics}.

\subsection{Numerical thresholds}
\label{sec:threshold}

We would like to construct the smallest family $\F_{\rm qLDPC}$ allowed by Algorithm~\ref{alg:gb_construction}.
Firstly, we pick $\ell = 5$, the smallest value that guarantees code distance $d \geq 3$.
We then perform a search for a base code as described in \secref{sec:basecode}, which finds the $[[10, 2, 3]]$ GB code defined by polynomials $a(x) = 1 + x^4$ and $b(x) = 1 + x + x^2 + x^4$.
The parity-check matrix of this code has weight $w_{\rm r} = \wt a(x) + \wt b(x) = 6$.
We then proceed to extend this code into new codes such that the code lengths increase in smallest steps.
This is achieved by choosing the integer arithmetic sequence $(\kappa_1, \kappa_2, \dots)$ with $\kappa_m = m$ for $m \geq 1$, leading to code lengths $10, 20, 30, \dots$ for codes in $\F_{\rm qLDPC}$.
Finally, we choose $p^{(m)}(x) = 1$ to ensure a constant $w_{\rm r}(m) = 6$ for all $m$, creating the 6--qLDPC family $\F_{\rm qLDPC}$.

\begin{figure}[t]
    \centering 
    \includegraphics[width=1\columnwidth]{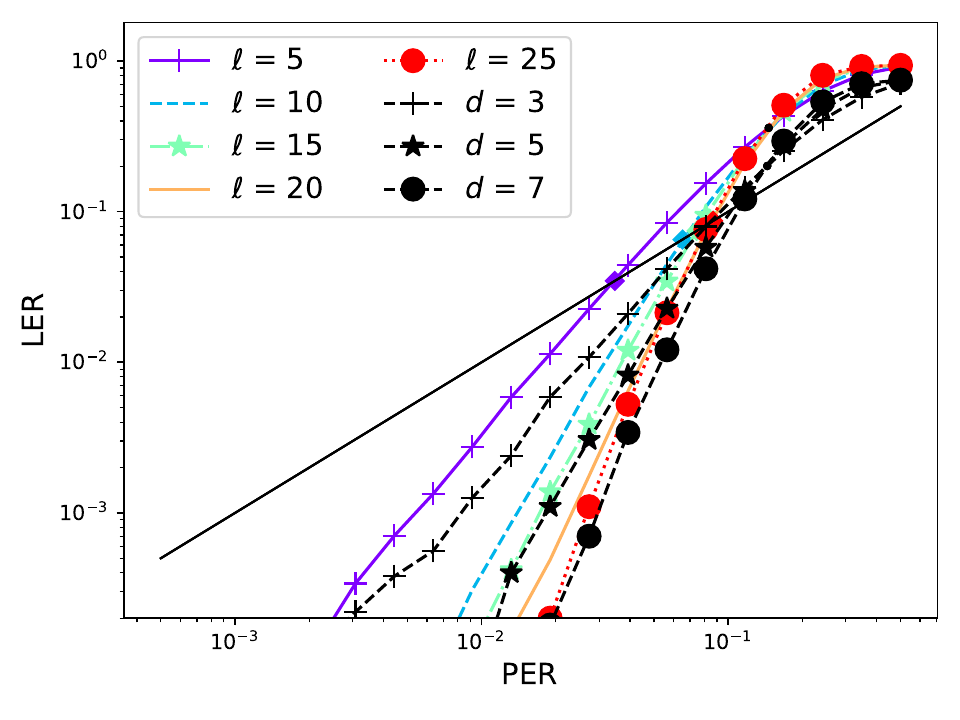}
    \caption{ {\bf Small qLDPC code family obtained from Algorithm~\ref{alg:gb_construction}.} 
    We plot the performance of a family $\F_{\rm qLDPC}$ of small qLDPC codes.
    The $[[10, 2, 3]]$ base code has the smallest code length that guarantees distance $d \geq 3$, and it is constructed by polynomials $a(x) = 1 + x^4$ and $b(x) = 1 + x + x^2 + x^4$.
    The construction of the remaining codes in $\F_{\rm qLDPC}$ follows Algorithm~\ref{alg:gb_construction} with $\kappa_m = m$ and $p^{(m)}(x) = 1$, leading to qLDPC codes with $\ell_m = 5m$ for $m = 1, \dots, 5$.
    We compare the performance of the first five codes in $\F_{\rm qLDPC}$ with the surface code of distances $d = 3,5,7$.
    The performance of codes in $\F_{\rm qLDPC}$ improves with increasing code length and the threshold point is around ${\rm PER} = 0.145$, similar to the surface code threshold that we plot.
    }
    \label{fig:threshold}
\end{figure}

We plot the five smallest codes of $\F_{\rm qLDPC}$ in \figref{fig:threshold}.
We observe a threshold value at around $PER = 0.145$, which is similar to the surface code threshold plotted for comparison.
The qLDPC codes of $\ell = 10$ and $\ell = 20$ are shown to outperform the surface codes of distance 3 and 5 respectively.
The qLDPC code of $\ell = 25$ requires 50 data qubits and 48 syndrome qubits for a total of 98 qubits.
In comparison the surface code of distance 7 requires a total of 97 qubits.
These two codes appear to exhibit similar performance.
Note that the qLDPC family encodes 2 logical qubits in contrast to the surface code which always encodes 1 logical qubit.

Following the same construction, i.e. using a GB base code, $\kappa_m = m$ and $p^{(m)}(x) = 1$, we find qLDPC code families with performance that improves as the code length increases at every $\ell = 5, \dots, 10$.
We present examples of these codes in Appendix~\ref{app:numerics}.

\section{Scalable quantum codes from LDPC codes}
\label{sec:scalable}

Given limitations of superconducting quantum computing platforms \cite{Bravyi2022}, it is important to take scalability into account when designing a quantum code.
Assuming that a quantum code is already embedded in a chip, it can then be easily replicated to create multiple identical quantum memories.
Once the computational capabilities need to scale, we might require a larger code to encode logical information.
In such a scenario, it is desirable to scale  current chips without imposing completely different connectivity,
making, therefore, procedural the development of chips related to a family of qLDPC codes. In this section, we consider this notion of scalability for a family of extended GB codes.

This is a strict notion of scalability as a larger code does not lead to any alterations in the connectivity between the original chip components (data and stabilizer qubits).
We can view this notion of scalability as the property that the parity-check matrix of a smaller code is embedded in the parity-check matrix of a larger code up to relabelling of qubits.
We provide a definition below.
\begin{definition}\label{def:scalability}
    Let $\F \coloneqq \{\Q_1, \dots, \Q_M \}$ be a family of $[[n_m, k_m]]$ quantum CSS codes for $m=1,\dots,M$.
    Then, $\F$ is \emph{scalable} if
    \begin{enumerate}
        \item for any two codes $\Q_{m'}, \Q_{m} \in \F$ with parity-check matrices given by $(2\ell' \times 2n')$ matrix $H'$ and $(2\ell \times 2n)$ matrix $H$, respectively, we have that $H$ is embedded in $H'$ whenever $n' > n$; i.e., there exists a sequence of column and row swaps such that
        \begin{equation}
            (H_{X}')_{ij} = (H_{X})_{ij} \quad\text{ if } (H_{X})_{ij} = 1,
        \end{equation}and
        \begin{equation}
            (H_{Z}')_{ij} = (H_{Z})_{ij} \quad\text{ if } (H_{Z})_{ij} = 1,
        \end{equation}
        for all $1 \leq i \leq \ell$ and $1 \leq j \leq n$,
        
        \item there exists a constant $p_{th}$ such that for all PER $p < p_{th}$, we have that the robustness of the code $\Q_{m'}$ is improved when compared with the code $\Q_{m}$, for all $m' > m$. In practice, we quantify robustness by the LER of the code.
    \end{enumerate}
\end{definition}

This definition is intuitively satisfied by topological codes, such as the surface code.
As the code length grows, the data qubits remain connected to the same syndrome qubits, up to relabelling, with the boundary data and syndrome qubits requiring new connections to newly introduced syndrome and data qubits, respectively.

If an qLDPC code has no boundary qubits, in the sense that each qubit participates in $q$ parity checks and each parity check involves $q$ qubits, then it is not possible to scale the code according to Definition~\ref{def:scalability}.
This is expected since requiring that the bigger parity-check matrix $H_{m'}$ embeds the smaller one $H_{m}$ leads to the form
\begin{equation}
    H_{m'} = \begin{pmatrix} H_m & \bmO \\ \bmO & H' \end{pmatrix},
\end{equation}
for some matrix $H'$.
The zero off-diagonal blocks $\bmO$ are necessary to ensure that the new code $\Q_{m'}$ is qLDPC, hence there is no connection between the original code $\Q_m$ and the new code $\Q_{m'}$. Therefore, we have two independent codes instead of one code with higher code length.
Note that the qLDPC construction of Algorithm~\ref{alg:gb_construction} for example demonstrated in Section~\ref{sec:threshold} falls under this scenario.

We are therefore forced to sacrifice sparsity of the parity-check matrix to ensure scalability when one considers extended GB codes via the construction of Algorithm~\ref{alg:gb_construction}.
A natural question to ask then is:
\begin{center}
    \emph{What is the minimal amount of connectivity that one needs to add to the original chip components to guarantee scalability according to Definition~\ref{def:scalability}?}
\end{center}
We are able to answer this question using Algorithm~\ref{alg:gb_construction} by constructing a scalable family of $\O((2/3)^m)$--sparse extended GB code, i.e.~a family of codes with a parity-check matrix density that decays exponentially with increasing code length according to Definition~\ref{def:sparsity}.

Given a matrix C, we define the lower triangular matrix $L(C)$ as the matrix obtained from $C$ by setting all elements above  the main diagonal $C_{ij}$ with $j > i$ to 0.
Similarly, we define the upper triangular matrix $U(C)$ as the matrix obtained from $C$ by setting all elements below and including the main diagonal $C_{ij}$ with $j \leq i$ to 0.
Using these definitions, we describe in Theorem~\ref{thm:scalable} our new family of extended GB codes that satisfy the scalability condition of Definition~\ref{def:scalability}.

\begin{restatable}{theorem}{mainclaim}\label{thm:scalable}
    Given a code $\Q_1$ defined by  $\ell \times \ell$ binary circulant matrices $A_1$ and $B_1$ via the parity-check matrix
    \begin{equation}
        H_1 \coloneqq \begin{pmatrix} (A_1|B_1) & 0 \\ 0 & (B_1^T|A_1^T) \end{pmatrix},
    \end{equation}
    one can construct a family of $[[n_m, k_m,d_m]]$ quantum codes $\F \coloneqq \{\Q_1, \dots, \Q_M \}$, such that:
    \begin{enumerate}
        \item\label{en:parameters} The codes in $\F$ satisfy
        \begin{equation}
            n_{m+1} = 3n_m \text{ and } k_{m+1} \geq k_m,
        \end{equation}
        for all $m = 1,\dots,M-1$;
        \item\label{en:Hm} $\Q_{m+1}$ is obtained from $\Q_m$ for $m = 1,\dots,M-1$ by setting $A_{m+1} = F(A_m)$ and $B_{m+1} = F(B_m)$, where function $F$ acts on matrices as
        \begin{equation}\label{eq:pc_inductive}
            F(C) \coloneqq \begin{pmatrix} L(C) & U(C) & C \\ C & L(C) & U(C) \\ U(C) & C & L(C) \end{pmatrix}.
        \end{equation}
        \item\label{en:sparsity} $\F$ is a family of $q(m)$--sparse code according to Definition~\ref{def:sparsity} with
        \begin{equation}\label{eq:scalable_qm}
            q(m+1) = \left(\frac{2}{3}\right)^{m} \frac{\max\{\wt a(x), \wt b(x)\}}{\ell},
        \end{equation}
        for $m = 1,\dots,M$,
        where polynomials $a(x), b(x)$ are isomorphic to $A_1, B_1$ respectively;
        \item\label{en:scalability} $\F$ is scalable according to Definition~\ref{def:scalability}.
    \end{enumerate}
\end{restatable}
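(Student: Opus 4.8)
The plan is to realize the map $F$ of \eqref{eq:pc_inductive} as a special case of the extension in Algorithm~\ref{alg:gb_construction}, and then to read off the four claims from the general theory already developed. First I would identify the relevant polynomial data: take $\kappa_m = 3^{m-1}$ so that $\ell_m = 3^{m-1}\ell$ and $n_m = 2\ell_m$, giving $n_{m+1} = 3n_m$ immediately. The key observation is that if $C$ is the $\ell\times\ell$ circulant matrix for $c(x)=\sum_i c_i x^i \in \FF_2^{\langle\ell\rangle}$, then $F(C)$ as defined in \eqref{eq:pc_inductive} is itself a circulant $3\ell\times 3\ell$ matrix: the block structure with $L(C)$ on the diagonal, $U(C)$ above and $C$ below (cyclically) is exactly how the circulant of $c(x)$ sits inside the circulant of $c(x)$ viewed in $\FF_2^{\langle 3\ell\rangle}$ — i.e. $F(C) \sim c(x) \in \FF_2^{\langle 3\ell\rangle}$ where now $c(x)$ is reinterpreted in the larger ring. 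I would verify this by direct inspection of \eqref{eq:circulant}: the $(i,j)$ entry of the circulant of $c(x)$ in dimension $3\ell$ is $c_{(i-j)\bmod 3\ell}$, and since $\deg c \le \ell-1$ this is nonzero only when $(i-j)\bmod 3\ell \in \{0,\dots,\ell-1\}$, which upon writing $i = a\ell + i'$, $j = b\ell + j'$ reproduces precisely the $L/U/C$ pattern of $F$. Hence iterating $F$ corresponds exactly to Algorithm~\ref{alg:gb_construction} with $\kappa_m = 3^{m-1}$ and $p^{(m)}(x) = 1$ for all $m$.

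With that identification, claim~\ref{en:parameters} ($k_{m+1}\ge k_m$, and by induction $k_m \ge k_1$) follows immediately from Theorem~\ref{thm:guarantee}, since the family is produced by Algorithm~\ref{alg:gb_construction}; claim~\ref{en:Hm} is the definitional content of the construction together with the circulant identity just established. For claim~\ref{en:sparsity}, I would compute directly: since $p^{(m)}(x)=1$, the polynomial weights are unchanged, so $w_{\rm c}(m) = \max\{\wt a(x),\wt b(x)\}$ and $\ell_m = 3^{m-1}\ell$; then condition~2 of Definition~\ref{def:sparsity} gives $q(m) \ge w_{\rm c}(m)/\ell_m = \max\{\wt a,\wt b\}/(3^{m-1}\ell)$, and checking condition~1 ($w_{\rm r}(m)/n_m = (\wt a + \wt b)/(2\cdot 3^{m-1}\ell)$, which is no larger) shows the column bound dominates, yielding exactly \eqref{eq:scalable_qm} after the reindexing $m \mapsto m+1$ (note $(2/3)^m$ versus $3^{-m}$: one must be slightly careful — I would present $q(m)$ with the factor $3^{-(m-1)}$ and reconcile the $(2/3)^m$ form in the statement, or simply take $q(m)$ to be any valid upper bound, which $(2/3)^{m-1}3^{-0}\cdots$ — this bookkeeping is the one genuinely fiddly point).

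Claim~\ref{en:scalability}, scalability per Definition~\ref{def:scalability}, is the substantive part. For the embedding condition (item~1 of that definition), I would argue that the circulant $A_m \sim a(x) \in \FF_2^{\langle 3^{m-1}\ell\rangle}$ contains the circulant $A_{m-1} \sim a(x) \in \FF_2^{\langle 3^{m-2}\ell\rangle}$ as a submatrix after a suitable simultaneous row/column permutation: concretely, the support of $a(x)$ lies in degrees $0,\dots,\ell-1$, so every nonzero entry of $A_{m-1}$ reappears in $A_m$ in the row/column block indexed by the "$C$" blocks of $F$, and selecting those index sets realizes the required swaps so that $(H_X)_{ij}=1 \Rightarrow (H_X')_{ij}=1$ (and likewise for $H_Z$, using $B^T, A^T$). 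I would make this precise by an explicit permutation: group the $3^{m-1}\ell$ indices into $\ell$ residue classes mod $\ell$ of the original base, nested across the $m$ levels, analogous to a base-3 positional labelling. Item~2 (monotone improvement of LER below some $p_{th}$) I would support by the same reasoning used for the $\F_{\rm qLDPC}$ family: $k_m$ is non-decreasing while the distance does not decrease under the extension (the base code's logical operators remain logical operators of weight $\ge d_1$, and new low-weight logicals are ruled out below threshold), so the standard argument that a larger code in an extended GB family has lower LER below threshold applies; I would cite the numerical evidence of \secref{sec:threshold} and Appendix~\ref{app:numerics} as the empirical instantiation. The main obstacle I anticipate is item~2 of Definition~\ref{def:scalability}: unlike the algebraic claims, "improved robustness below $p_{th}$" is not a pure corollary of Theorem~\ref{thm:guarantee} and requires either a distance lower bound for the extended codes or an honest appeal to the numerics — so the proof will likely establish items~1, and the parameter/sparsity claims rigorously, and treat the threshold-improvement clause on the same (partly numerical) footing as elsewhere in the paper.
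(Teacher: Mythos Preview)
Your central identification is wrong, and this propagates through the rest of the argument. You claim that $F(C)$ is the $3\ell\times 3\ell$ circulant of $c(x)$ viewed in $\FF_2^{\langle 3\ell\rangle}$, i.e.\ that the construction is Algorithm~\ref{alg:gb_construction} with $p^{(m)}(x)=1$. But if you actually carry out the block computation you sketch, the circulant of $c(x)$ in $\FF_2^{\langle 3\ell\rangle}$ has block form
\[
\begin{pmatrix} L(C) & 0 & U(C) \\ U(C) & L(C) & 0 \\ 0 & U(C) & L(C) \end{pmatrix},
\]
not $F(C)$. The map $F$ in \eqref{eq:pc_inductive} is instead the circulant of $(1+x^{\ell})\,c(x)$ in $\FF_2^{\langle 3\ell\rangle}$: the extra shifted copy $x^{\ell}c(x)$ is exactly what fills in the three full $C$ blocks. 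Iterating, the correct polynomial data is $\kappa_m=3^{m-1}$ and $p^{(m)}(x)=\prod_{k=1}^{m-1}(1+x^{\ell_k})$, not $p^{(m)}(x)=1$; this is what the paper uses.

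This resolves the ``fiddly bookkeeping'' you flagged in claim~\ref{en:sparsity}: with the correct $p^{(m)}$, each step doubles the polynomial weights, so $w_{\rm c}(m)=2^{m-1}\max\{\wt a,\wt b\}$ and $\ell_m=3^{m-1}\ell$, giving precisely the $(2/3)^{m-1}$ in \eqref{eq:scalable_qm}. Your $3^{-(m-1)}$ was a symptom of the misidentification, not a reindexing issue. It also affects claim~\ref{en:parameters}: Theorem~\ref{thm:guarantee} as stated only gives $k_m\geq k_1$, not $k_{m+1}\geq k_m$; the paper obtains the stronger chain by using the factorizations $x^{3\ell_m}-1=(x^{\ell_m}-1)(x^{2\ell_m}+x^{\ell_m}+1)$ and $p^{(m+1)}=(1+x^{\ell_m})p^{(m)}$ together with Proposition~\ref{thm:gcd_inequality} applied at level $m$ (equivalently, re-running Theorem~\ref{thm:guarantee} with $\Q_m$ as the base). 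For claim~\ref{en:scalability}, the paper does not use a residue-class/base-3 labelling but gives an explicit block-column permutation that brings the three $C$ blocks of $F(A_m)$ and $F(B_m)$ into the top-left positions of $H_X'$ and $H_Z'$; your approach via ``the $C$ blocks of $F$'' is the right instinct but needs the correct block pattern to work.
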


We prove Theorem~\ref{thm:scalable} in Appendix~\ref{app:scalability}.

To illustrate the connectivity of the extended code, consider a representation of a chip architecture of the extended code shown in Fig.~\ref{fig:chip_connectivity}. The chip architecture for the $m$-th family member comprises three identical chips depicted as dashed rectangles. Each chip consists of data and syndrome qubits (black dots) which have the $(m-1)$-th family member connectivity, which is schematically indicated by a cloudy hatching. 
The $(m-1)$-th family member connectivity is the required connectivity to implement the $(m-1)$-th code from that family. Further, there is connectivity between the qubits of the first and the second chips, between the qubits of the second and the third chips, and between the qubits of the first and the third chips such that the connectivity between all qubits is according to the required connectivity of the extended code given by Eq.~(\ref{eq:pc_inductive}). It is, therefore, evident that the code construction shown in Theorem~\ref{thm:scalable} are particularly suitable for implementation on state-of-the-art quantum computing architectures wherein chips with a relatively small number of qubits are replicated and connected to allow for an implementation of QEC codes with larger code length, and potentially larger dimension and distance.

\begin{figure}[t]
    \centering 
    \includegraphics[width=1\columnwidth]{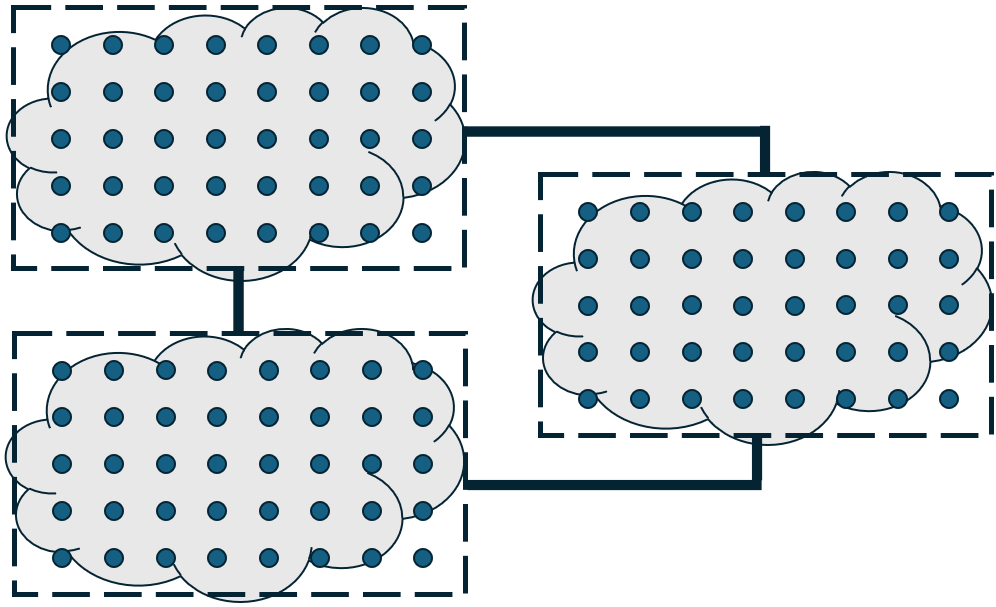}
    \caption{ {\bf Chip architecture of the extended code of Theorem~\ref{thm:scalable}.} The chip architecture of the $(m+1)$-th family member is composed of three identical chips describing the connectivity of the $m$-th family member.
    Each chip consists of data and syndrome qubits represented by black dots. To avoid imposing a particular connectivity on the $m$-th family member, we have its schematic indicated by a cloudy hatching. Notice that there is connectivity between the qubits of the first and the second chips, between the qubits of the second and the third chips, and between the qubits of the first and the third chips such that the connectivity between all qubits is according to the required connectivity of the extended code.}
    \label{fig:chip_connectivity}
\end{figure}

The following theorem describes an additional scalable construction of qLDPC codes given in this paper. This construction focuses on preserving the symmetry present on the parity-check matrix of the base code when we create the extended code.

\begin{restatable}{theorem}{secondmainclaim}\label{thm:scalable2}
    Given a code $\Q_1$ defined by  $\ell \times \ell$ binary circulant matrices $A_1$ and $B_1$ via the parity-check matrix
    \begin{equation}
        H_1 \coloneqq \begin{pmatrix} (A_1|B_1) & 0 \\ 0 & (B_1^T|A_1^T) \end{pmatrix},
    \end{equation}
    one can construct a family of $[[n_m, k_m,d_m]]$ quantum codes $\F \coloneqq \{\Q_1, \dots, \Q_M \}$, such that:
    \begin{enumerate}
        \item\label{en:parameters2} The codes in $\F$ satisfy
        \begin{equation}
            n_{m+1} = n + 2m\cdot r \text{ and } k_{m+1} \geq k_m,
        \end{equation}for $m = 1,\dots,M-1$ and $r$ integer;
        \item\label{en:Hm} $\Q_{m+1}$ is obtained from $\Q_m$ for $m = 1,\dots,M-1$ by setting $A_{m+1} = F(A_m)$ and $B_{m+1} = F(B_m)$, where function $F$ acts on matrices as
        \begin{equation}\label{eq:pc_inductive2}
            F(C) \coloneqq 
            \begin{pmatrix}
                c_0        & c_{\ell-1} & \dots & c_1 \\ 
                c_1        & c_0        & \dots & c_2 \\
                \dots      & \dots      & \dots & \dots \\
                c_{j-2}    & \dots      & \dots & c_{j-1} \\
                c_{j-1}    & c_{j-2}    & \dots & 0 \\
                0          & c_{j-1}    & \dots & 0 \\
                \dots      & \dots      & \dots & \dots \\
                0          & 0          & \dots & c_{j} \\
                c_{j}      & 0          & \dots & c_{j+1} \\
                c_{j+1}    & c_{j}      & \dots & c_{j+2} \\
                \dots      & \dots      & \dots & \dots \\
                c_{\ell-1} & c_{\ell-2} & \dots & c_0
            \end{pmatrix},
        \end{equation}where $c(x)$ is isomorphic to the matrix $C$ and $0<j<\ell-1$.
        \item\label{en:sparsity} $\F$ is a family of $t$-qLDPC codes according to Definition~\ref{def:sparsity} with
        $t = \wt a(x) + \wt b(x)$, where polynomials $a(x), b(x)$ are isomorphic to $A_1, B_1,$ respectively;
    \end{enumerate}
\end{restatable}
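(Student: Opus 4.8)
The plan is to read the matrix map $F$ of \eqref{eq:pc_inductive2} as a weight-preserving, circulant-preserving ``zero-insertion'', which places the construction inside the GB framework of Section~\ref{sec:construction} and lets us invoke Proposition~\ref{prop:k}. First I would verify that if $C \sim c(x) \in \FF_2^{\langle \ell \rangle}$ then $F(C)$ is again circulant, $F(C) \sim \tilde c(x) \in \FF_2^{\langle \ell+r \rangle}$, where $\tilde c$ is obtained from $c$ by cutting the coefficient string between degrees $j-1$ and $j$ and splicing in $r$ zeros; equivalently, writing $c(x) = c_{\rm low}(x) + x^{j} c_{\rm high}(x)$ with $\deg c_{\rm low} < j$, one has $\tilde c(x) = c_{\rm low}(x) + x^{j+r} c_{\rm high}(x)$. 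This is exactly item~2 of the statement, and it has two immediate payoffs. Because $A_{m+1}=F(A_m)$ and $B_{m+1}=F(B_m)$ are circulant of the \emph{same} size they commute, so $H_{m+1}$ satisfies the CSS condition \eqref{eq:commutativity} just as for GB codes and each $\Q_m$ is a well-defined $[[2\ell_m,k_m]]$ code; the recursion $\ell_{m+1}=\ell_m+r$ with $\ell_1=\ell$ gives $\ell_m=\ell+(m-1)r$, hence $n_{m+1}=2\ell_{m+1}=n+2mr$. And since zero-insertion leaves the Hamming weight unchanged, $\wt \tilde c(x)=\wt c(x)$, the generating polynomials $a^{(m)}(x),b^{(m)}(x)$ of $A_m,B_m$ have the same weights as $a(x),b(x)$ for all $m$, so the row weight of $H_m$ is the constant $t:=\wt a(x)+\wt b(x)$ and the column weight is $\max\{\wt a(x),\wt b(x)\}\le t$ while $\ell_m,n_m\to\infty$; this is item~3 via Definition~\ref{def:sparsity}.

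What remains is the dimension inequality in item~1. By Proposition~\ref{prop:k}, $k_m=2\deg g_m$ with $g_m=\gcd\!\big(a^{(m)}(x),b^{(m)}(x),x^{\ell_m}-1\big)$, so it suffices to control $\deg g_m$. A short induction identifies iterating $F$ with a single long zero-insertion, $a^{(m)}(x)=a_{\rm low}(x)+x^{\,j+(m-1)r}a_{\rm high}(x)$ and similarly for $b^{(m)}$. The key lemma I would prove is the divisibility fact: \emph{if $d(x)\mid c(x)$ and $d(x)\mid x^{\rho}-1$ in $\FF_2[x]$, then $d(x)$ divides $c_{\rm low}(x)+x^{\,j+\rho}c_{\rm high}(x)$.} In characteristic two this is one line: $c\equiv 0\pmod d$ gives $c_{\rm low}\equiv x^{j}c_{\rm high}\pmod d$, hence $c_{\rm low}+x^{j+\rho}c_{\rm high}\equiv x^{j}c_{\rm high}(1+x^{\rho})\equiv 0\pmod d$. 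Applying this with $d=g_1$, $\rho=(m-1)r$ — assuming $r$ is chosen so that $g_1(x)\mid x^{r}-1$, which holds whenever $\ell\mid r$ since then $x^{\ell}-1\mid x^{r}-1$ and $g_1\mid x^{\ell}-1$ — yields $g_1\mid a^{(m)}$ and $g_1\mid b^{(m)}$; combined with $g_1\mid x^{\ell_m}-1$ (which follows from $\ell\mid r\Rightarrow\ell\mid\ell_m$) this gives $g_1\mid g_m$, so $k_m=2\deg g_m\ge 2\deg g_1=k_1$. This is the same no-collapse mechanism as in Theorem~\ref{thm:guarantee}, with polynomial multiplication there replaced by zero-insertion.

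The main obstacle, which I expect to need the most care, is twofold. First, the hypothesis on $r$ is essential rather than cosmetic: as observed just after Theorem~\ref{thm:guarantee2}, an incompatible extension can drive the dimension to zero, so the theorem must be understood with $r$ restricted (a multiple of $\ell$, or more sharply any $r$ with $g_1(x)\mid x^{r}-1$), and pinning this down is part of the argument. Second, upgrading ``$k_m\ge k_1$'' to the per-step ``$k_{m+1}\ge k_m$'' requires the hypothesis to propagate, i.e.\ $g_m(x)\mid x^{r}-1$ at every stage rather than only $g_1(x)\mid x^{r}-1$; I would try to carry the stronger invariant that each $g_m$ still divides $x^{\ell}-1$ (whence $g_m\mid x^{\ell}-1\mid x^{r}-1$), and verify that it survives the passage from $g_m$ to $g_{m+1}=\gcd(a^{(m+1)},b^{(m+1)},x^{\ell_{m+1}}-1)$. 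The danger is that $x^{\ell_{m+1}}-1$ contributes new irreducible factors of order not dividing $\ell$ that could enter the gcd; ruling this out — by a root/order analysis of the zero-inserted polynomials, or by further restricting $r$ — is the crux. Once it is secured, items~1--3 of Theorem~\ref{thm:scalable2} are all in hand.
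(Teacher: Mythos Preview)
Your reading of $F$ as a weight-preserving zero-insertion on the coefficient string is exactly right and matches the paper at the polynomial level: it writes $a(x)=f^{(a)}(x)+g^{(a)}(x)$ with $\deg f^{(a)}<j$ and obtains $a^{(m)}(x)=f^{(a)}(x)+x^{rm}g^{(a)}(x)$, which is your $a_{\rm low}+x^{j+(m-1)r}a_{\rm high}$ up to index shift. Where the two arguments diverge is in how the dimension bound is extracted. The paper forces the construction into the mould of Algorithm~\ref{alg:gb_construction} by positing a \emph{single} polynomial $p^{(m)}(x)$ with $p^{(m)}a=a^{(m)}$ and $p^{(m)}b=b^{(m)}$ simultaneously (defined implicitly via a linear system in the coefficients), sets $\kappa_m=(\ell+rm)/\ell$, and then simply cites Theorem~\ref{thm:guarantee}. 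You instead bypass Algorithm~\ref{alg:gb_construction} entirely and prove a clean divisibility lemma ($d\mid c$ and $d\mid x^{\rho}-1$ imply $d\mid c_{\rm low}+x^{j+\rho}c_{\rm high}$) to show $g_1\mid g_m$ directly via Proposition~\ref{prop:k}. Your route is the more robust one: the paper's linear system for $p^{(m)}$ is imposed for both $a$ and $b$ at once and is generically over-determined, so such a $p^{(m)}$ need not exist for arbitrary generating polynomials, whereas your argument never requires it. Conversely, the paper's framing surfaces the integrality constraint $\kappa_m\in\ZZ$ (hence $\ell\mid r$) immediately, which you correctly identify but only after some analysis.

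Your diagnosis of the remaining obstacle is accurate and in fact sharper than the paper's treatment: invoking Theorem~\ref{thm:guarantee} delivers only $k_m\ge k_1$, which is exactly what your divisibility lemma gives, so the per-step monotonicity $k_{m+1}\ge k_m$ asserted in the theorem is not established by the paper's proof either. Your proposed invariant $g_m\mid x^{\ell}-1$ is the natural strengthening to attempt; even if it fails because $g_m$ acquires an irreducible factor of order not dividing $\ell$, the weaker bound $k_m\ge k_1$ still stands by your argument, and that is all the paper actually proves.
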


We prove Theorem~\ref{thm:scalable2} in Appendix~\ref{app:scalability2}.

We have proposed two code constructions with interesting properties. On the first hand, we have shown in Theorems~\ref{thm:scalable} how to utilize a GB code ansatz to build a family of codes which is scalable according to Definition~\ref{def:scalability}. Scalability is obtained in Theorem~\ref{thm:scalable} at the expense of weakening the LDPC property to the property that parity-check matrices have exponentially decaying density. 
Observing the inductive expression in \eqref{eq:pc_inductive} that describes how the $(m+1)$-th parity-check matrix $H_{m+1}$ is obtained from the $m$-th parity-check matrix $H_m$, we see that each circulant matrix $A_m,$ and $B_m$ is embedded three times within $A_{m+1},$ and $B_{m+1}$, respectively, each time involving a separate set of qubits and stabilizers.
The remaining six blocks that make up $A_{m+1},$ and $B_{m+1}$ are lower and upper triangular matrices obtained from $A_m,$ and $B_m$, so they contain a strict subset of the connections between syndrome and data qubits that appear in $A_m,$ and $B_m$. Therefore, $H_{m+1}$ locally shares the structure of $H_m$. 
On the second hand, we have given the development of families of qLDPC codes in Theorem~\ref{thm:scalable2} by appending a zero-vector to the first column of the parity-check matrix $H_{m}$ and imposing the resultant parity-check matrix to be a circulant matrix. Since column and row weights are maintained, the family of codes from Theorem~\ref{thm:scalable2} is indeed qLDPC. Unfortunately, the embedding property described above for the family of codes in Theorem~\ref{thm:scalable} cannot be stated for the family in Theorem~\ref{thm:scalable2}. However, as can be seen from the inductive expression in \eqref{eq:pc_inductive2} and some elementary matrix manipulation, the $m$-th parity-check matrix $H_{m}$ obtained from the GB parity-check matrix $H_1$ has circulant matrices with only few additional long-range connections.

\section{Conclusion}
\label{sec:summary}

We have introduced quantum code construction methods of small CSS codes. 
Our constructions notably differ from previous attempts presented in literature in that they are neither topological nor based on the manipulation of the Tanner graph of quantum codes.
Instead, they are based on directly manipulating parity-check matrices of GB codes, which is more akin to the construction of quantum product codes based on the parity-check matrices of classical codes. As a consequence, we describe two new techniques to construct families of quantum LDPC codes from extended GB codes that satisfy the scalability condition and one technique that produces sparse codes. In particular, the latter technique can be used to generate a family of codes that can be implemented on chips which embed the connectivity graph of smaller members of the same family. This property is particularly suitable for implementation on state-of-the-art quantum computing architectures wherein chips with a relatively small number of qubits are replicated, therefore making our code families scalable.

We have numerically demonstrated that our construction leads to quantum LDPC codes of performance which is numerically comparable to the surface code for minimum distances of up to seven. 
Here, we quantify performance in terms of the threshold value and the lambda factor, which is the error suppression obtained by increasing the code member index in the family. By using this methodology, we presented a family of qLDPC codes which exhibits a similar performance compared to the surface code family, but encodes two logical qubits instead of one. Furthermore, we noticed that the performance curve of a qLDPC family member has higher slope than the distance seven surface code, which means that for some particular physical error rate the performance of our qLDPC code is superior to the surface code.

On the experimental side, we know that non-planar connectivity is necessary to obtain good quantum codes in terms of error correction capabilities. Nevertheless, it is imperative to also take into account real-world device capabilities in order to realize early experimental demonstrations of small scalable quantum codes. 
Our work attempts to do precisely that by designing families of codes fitting the needs of real-world quantum devices, specifically those with connectivity graphs that are scalable or of a constant vertex degree. One can i.e. propose a chip design implementing the syndrome-extraction circuit for the code constructions that we presented in this paper by using using flip-chip technology, resonators or SWAP networks. We aim to determine which of these strategies gives optimum performance in future studies.

This paper also suggests other future lines of investigation. Firstly, our numerical performance is evaluated under a phenomenological noise model and we it would be of interest to study fault-tolerant syndrome-extraction circuits and analyze them under circuit-level noise. Since qLDPC codes traditionally have at least weight six stabilizers, an analysis under circuit-level noise is important to understand the cost of having higher stabilizer weights, larger circuit depths, as well as more hook errors per stabilizer. 
Secondly, an obvious direction arising from our work is exploring further properties of codes belonging to a given family. In particular, it could be relevant to create families of codes with minimum distance guarantees. In this way, one could reduce the search space of code families. Thirdly, it is crucial for the progress and practical implementation of quantum error-correcting codes and, ultimately, fault-tolerant quantum algorithms that we identify families of quantum error-correcting codes which are implementation friendly. We have made the first step in this direction in this paper. However, it would be desirable to pin down particular families of codes of interest, which fit additional geometric and implementation constraints, i.e., codes with low thickness and local non-planar connectivity graphs. 
Lastly, a more comprehensive theory describing how to change our construction algorithm to other code types than GB codes can pave the way for exploring further code constructions. A careful modification of our algorithm using hypergraph- or lifted product codes can produce code families which are simultaneously asymptotically good and implementation friendly.

\bibliographystyle{quantum}
\bibliography{ref}

\onecolumn
\appendix

\section{Supplementary analytical results}
\label{app:analytics}

Here we provide some technical analytical statements. In particular, we recall some known statements about matrices and polynomial rings over finite fields. Then, we prove the results presented in Theorems~\ref{thm:scalable}~and~\ref{thm:scalable2}. The sparsity of the generalized Shor code is discussed. Lastly, we give further examples of families of codes obtained from Algorithm~\ref{alg:gb_construction}.

\subsection{Preliminaries}
\label{app:prelim}

We first recall our definition of upper and lower triangular matrices.
\begin{definition}
    Given an $n \times n$ matrix $C$, we call the $n \times n$ matrix denoted by $L(C)$ \emph{lower triangular} if $L(C)_{ij} = C_{ij}$ for all $j \leq i$ and $L(C)_{ij} = 0$ for all $j > i$.
    
    Furthermore, we call the $n \times n$ matrix denoted by $U(C)$ \emph{upper triangular} if $U(C)_{ij} = C_{ij}$ for all $j > i$ and $U(C)_{ij} = 0$ for all $j \leq i$.
\end{definition}

We then state a useful standard result known in the literature~\cite{niederreiter2009finitefields}.

\begin{proposition}\label{thm:gcd_inequality}
    Let $\FF_n^{\langle \ell \rangle}$ be a polynomial ring for some positive integers $n,\ell$.
    Then, for any $p(x), s(x), q(x) \in \FF_n^{\langle \ell \rangle}$, such that $\deg s(x) q(x) < \ell$, 
    \begin{align}
        \deg \gcd\big(p(x), s(x)q(x) \big) \geq \deg \gcd\big( p(x), q(x) \big).
    \end{align}
\end{proposition}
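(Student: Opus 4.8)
The statement to prove is Proposition~\ref{thm:gcd_inequality}: for $p(x), s(x), q(x) \in \FF_n^{\langle \ell \rangle}$ with $\deg s(x)q(x) < \ell$, we have $\deg \gcd(p(x), s(x)q(x)) \geq \deg \gcd(p(x), q(x))$. The plan is to reduce this to a purely polynomial-ring statement over $\FF_n[x]$ (no modular reduction), since the hypothesis $\deg s(x)q(x) < \ell$ is precisely what guarantees that the product $s(x)q(x)$ computed in $\FF_n^{\langle \ell \rangle}$ coincides with the product computed in $\FF_n[x]$, so no wraparound modulo $x^\ell - 1$ occurs. The key observation is the divisibility fact that $\gcd(p(x), q(x))$ divides $q(x)$, hence divides $q(x)s(x)$, and it also divides $p(x)$ by definition; therefore $\gcd(p(x),q(x))$ is a common divisor of $p(x)$ and $q(x)s(x)$, and so it divides their gcd. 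Taking degrees gives the inequality.

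\textbf{Key steps, in order.} First I would record that, because $\deg(s(x)q(x)) < \ell$, the element $s(x)q(x)$ of the quotient ring $\FF_n[x]/(x^\ell-1)$ has a canonical representative of degree $<\ell$ that equals the honest product in $\FF_n[x]$; likewise each of $p(x), q(x)$ is its own representative of degree $<\ell$. Hence every gcd appearing in the statement may be interpreted as an ordinary gcd of polynomials in the UFD $\FF_n[x]$ (when $n$ is prime, $\FF_n$ is a field and $\FF_n[x]$ is a Euclidean domain; the paper works in $\FF_2$, which suffices). Second, set $g(x) := \gcd(p(x), q(x))$. By definition $g(x) \mid p(x)$ and $g(x) \mid q(x)$, and from $g(x)\mid q(x)$ we get $g(x) \mid q(x)s(x)$. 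Third, conclude that $g(x)$ is a common divisor of $p(x)$ and $q(x)s(x)$, so $g(x) \mid \gcd(p(x), q(x)s(x))$. Fourth, since a divisor has degree at most that of what it divides (and the relevant gcd is nonzero as $p(x)$ appears in both — or if $p(x)=0$ the inequality is trivial since both sides are $\deg q$ versus $\deg(qs)$, handled separately), take degrees to obtain $\deg \gcd(p(x), q(x)s(x)) \geq \deg g(x) = \deg \gcd(p(x), q(x))$.

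\textbf{Main obstacle.} There is no deep obstacle here; the only subtlety worth being careful about is the interplay between the quotient ring $\FF_n^{\langle \ell \rangle}$ and the polynomial ring $\FF_n[x]$. One must make explicit that the degree hypothesis $\deg s(x)q(x)<\ell$ is exactly the condition ensuring the three gcds in the quotient ring agree with the gcds in $\FF_n[x]$ (otherwise "gcd" in a non-domain quotient ring such as $\FF_2[x]/(x^\ell-1)$ is not even well-defined in the usual sense, since $x^\ell-1$ factors). Once that translation is made, the argument is the elementary divisibility chain above. I would also note in passing that this is a standard fact, citing Ref.~\cite{niederreiter2009finitefields} as the paper already does, so the proof can be kept to a few lines.
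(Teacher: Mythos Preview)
The paper does not actually prove Proposition~\ref{thm:gcd_inequality}; it merely states it as a ``useful standard result known in the literature'' with a citation to Ref.~\cite{niederreiter2009finitefields} and then invokes it in the proofs of Theorems~\ref{thm:guarantee} and~\ref{thm:scalable}. Your argument is correct and is exactly the elementary divisibility chain one would supply: $\gcd(p,q)$ divides both $p$ and $q$, hence divides $p$ and $qs$, hence divides $\gcd(p,qs)$, giving the degree inequality. Your observation that the hypothesis $\deg s(x)q(x)<\ell$ is precisely what lets one work in the Euclidean domain $\FF_n[x]$ rather than the quotient ring (where $\gcd$ would be ill-defined) is the right way to frame why that hypothesis is present, and it matches how the paper silently uses the proposition (always applying it to representatives whose product does not wrap around).
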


This result is used in Theorem~\ref{thm:guarantee} and we will use it to show $k_{m+1} \geq k_m$ in the proof of Theorem~\ref{thm:scalable} in Appendix~\ref{app:scalability}.

\subsection{Dimension guarantee}
\label{app:kguarantee}

Here, we construct an explicit example to show that, for a family of codes $\F = \{\Q_1, \Q_2, \dots\}$ obtained by Algorithm~\ref{alg:gb_construction}, $\ell_m$ needs to be an integer multiple of $\ell$, otherwise the dimension $k_m$ of code $\Q_m$ can be 0.

To see this, consider the construction where $p^{(m)}(x) = 1$ for all $m=1,\dots,M$ and set $\ell_m = \kappa_m\ell + r$ for some $m, \kappa_m, r \in \NN$ such that $0 < r < \ell$.
Then, 
\begin{equation}
    x^{m\ell + r} - 1 = x^r\left(\sum_{i=0}^{m-1} x^{i\ell} \right) (x^{\ell} - 1) + (x^r - 1),
\end{equation}
which is not a multiple of $(x^{\ell} - 1)$ due to the additional term $(x^r - 1)$ because $0 < r < \ell$.

For example, consider extending polynomials $a(x), b(x) \in \FF_2^{\langle 7 \rangle}$ given by $a(x) = 1 + x + x^3$ and $b(x) = 1 + x^2 + x^3 + x^4$ to polynomials $a'(x), b'(x) \in \FF_2^{\langle 8 \rangle}$ such that $a'(x) = a(x)$ and $b'(x) = b(x)$.
Polynomials $a(x), b(x)$ define a $[[14,6]]$ GB code, because $2\deg \gcd(a(x), b(x), x^{7}-1) = 6$.
However, we observe that 
\begin{align*}
    \gcd(a'(x), b'(x), x^{8}-1) = \gcd(a(x), b(x), x^{8}-1) = 1,
\end{align*}
giving a zero-dimensional code when we try to extend the code from $\FF_2^{\langle 7 \rangle}$ to $\FF_2^{\langle 8 \rangle}$.

\subsection{Sparsity of the generalized Shor Code}
\label{app:sparsity_shor_code}

The Shor quantum error correcting-code was initially created to correct all types of single qubit errors. It is given by the concatenation of the distance three phase-flip QEC code with the distance 3 bit-flip QEC code. Now, we consider the generalized $[[d^2,1,d]]$ Shor code given by the concatenation of distance $d$ phase-flip QEC code with the distance $d$ bit-flip QEC code. One can see that the stabilizer set is generated by $\mathcal{S} = \langle Z_{d\cdot i+1} Z_{d\cdot i+2},\ldots,Z_{d\cdot i+d-1} Z_{d\cdot i+d}, X_{d\cdot j+1} \cdots X_{d\cdot j+2d}| i=0,\ldots, d-1, \text{ and }j=0,\ldots,d-2\rangle.$ Since each $X$--stabilizer has weight $2d$ and each qubit is in the support of at maximum 2 $X$--stabilizers and 2 $Z$--stabilizers, we have that the row weight and column weight of the parity-check matrix $H_\text{Shor}$ of the generalized Shor code is $w_r(d) = 2d$ and $w_c(d) =4$, respectively. Thus, the sparsity of $H_{\text{Shor}}$, according to Definition~\ref{def:sparsity}, is equal to 
    \begin{eqnarray}
        \frac{w_r(d)}{n_d} &=& \frac{2d}{d^2}=\frac{2}{d},\\
        \frac{w_c(d)}{n_d} &=& \frac{4}{d^2} < \frac{2}{d}.
    \end{eqnarray}It is therefore evident that the family of generalized Shor codes $\F_{\text{Shor}}$ is $\mathcal{O}(1/m)$--sparse, where $m = (d-1)/2$ labels the elements in the family.

\subsection{Proof of Theorem~\ref{thm:scalable}}
\label{app:scalability}

\mainclaim*
\begin{proof}
    We first define an integer sequence and a polynomial sequence that allow us to follow the construction of Algorithm~\ref{alg:gb_construction} with $\Q_1$ as the base code.
    
    To this end, we define an integer sequence $(\kappa_1, \kappa_2, \dots)$ by 
    \begin{equation}
        \kappa_m = 3^{m-1} \,,
    \end{equation}
    so that $\ell_m = \kappa_m\ell_1$ for all $m = 1,\dots,M$.
    We then define a polynomial sequence $(p^{(1)}, p^{(2)}, \dots)$ by
    \begin{equation}
        p^{(1)} = 1 \text{ and } p^{(m)} = \prod_{k=1}^{m-1} \left(1+x^{\ell_k}\right) \,,
    \end{equation}
    for $m \geq 2$.
    We note that $p^{(m)} \in \FF_2^{\langle (\kappa_m - 1)\ell_1 + 1 \rangle}$, as
    \begin{equation}
       \deg p^{(m)} = \sum_{k=1}^{m-1} 3^{k-1} \ell_1 = \frac{1}{2} (3^{m-1}-1) \ell_1 < (\kappa_m - 1)\ell_1 + 1.
    \end{equation}
    
    Each code $\Q_m \in \F$ is therefore a GB code obtained from the construction of Algorithm~\ref{alg:gb_construction}.
    We now proceed to prove each statement in the theorem in turn.
    
    \begin{enumerate}
    \item Each code $\Q_m \in \F$ is a $[[2\ell_m, k_m]]$ code, where
    \begin{equation}\label{eq:ellm_scalable}
        \ell_{m+1} = 3\ell_m,
    \end{equation}
    for all $m \geq 1$.
    Therefore, $n_{m+1} = 3n_m$.
    This construction also leads to a stronger bound on the code dimension $k_m$ than the guarantee provided in Theorem~\ref{thm:guarantee}.
    To this end, we note that
    \begin{equation}
        x^{\ell_{m+1}} - 1 = (x^{\ell_m} - 1)(x^{2\ell_m} + x^{\ell_m} + 1),
    \end{equation}
    and
    \begin{equation}\label{eq:pm_scalable}
        p^{(m+1)} = (1+x^{3^{m-1}\ell_1})p^{(m)},
    \end{equation}
    for all $m \geq 1$, so
    \begin{align}\label{eq:ab_scalable}
        &\gcd\left(a^{(m+1)}(x),b^{(m+1)}(x)\right) \nonumber\\
        = &(1+x^{3^{m-1}\ell_1})\gcd\left(a^{(m)}(x),b^{(m)}(x)\right).
    \end{align}
    Therefore, we obtain
    \begin{align}
        k_{m+1} &= 2\hspace{2pt}\deg \gcd\left( a^{(m+1)}(x), b^{(m+1)}(x), x^{\ell_{m+1}} - 1) \right) \nonumber\\
        &\geq 2\hspace{2pt}\deg \gcd\left( \gcd\left(a^{(m)}(x), b^{(m)}(x)\right), x^{\ell_{m+1}} - 1 \right) \nonumber\\
        &\geq 2\hspace{2pt}\deg \gcd\big( a^{(m)}(x),b^{(m)}(x), x^{\ell_m} - 1 \big) \nonumber\\
        &= k_m,
    \end{align}
    where in the first inequality we have used \eqref{eq:ab_scalable} along with Proposition~\ref{thm:gcd_inequality}, and in the second inequality we have used \eqref{eq:ellm_scalable} along with Proposition~\ref{thm:gcd_inequality}.
    This proves statement~\ref{en:parameters}.
    
    \item We can consider the effect of polynomial $p^{(m)}$ given in \eqref{eq:pm_scalable} that transforms the code generating polynomials $a^{(m)},b^{(m)}$ on shaping the parity-check matrix $H_m$ inductively on $m$.
    We recall the isomorphisms $a^{(m)} \sim A_m$ and $b^{(m)} \sim B_m$ for the $m$-th code $\Q_m$. 
    We further use $a(x), b(x)$ to denote $a^{(1)}(x), b^{(1)}(x)$ respectively.
    
    For $m=1$, we retrieve the base code $\Q_1$ with parity-check matrix
    \begin{equation}
        H_1 \coloneqq \begin{pmatrix} (A_1|B_1) & 0 \\ 0 & (B_1^T|A_1^T) \end{pmatrix}.
    \end{equation}
    Moreover, we see that $p^{(2)}(x)$ maps $\textcolor{blue}{a(x)} \mapsto a^{(2)}(x) = \textcolor{blue}{a(x)} + x^{\ell_1}\textcolor{red}{a(x)}$, therefore it creates a new polynomial of weight $2\wt a(x)$ which is the concatenation of two copies of the original polynomial so that the circulant property of $A_2$ will ensure that $A_1$ is embedded in $A_2$.
    We construct $A_2$ by inserting $\textcolor{blue}{a(x)} + x^{\ell_1}\textcolor{red}{a(x)} \in \FF_2^{\langle 3\ell_1 \rangle}$ as its first column and building the remaining columns by shifting the first column one step downwards (following the definition of a circulant matrix in \eqref{eq:circulant}).
    The monomial $x^{\ell_1}$ represents a shift of the coefficients of polynomial $a(x)$ by $\ell_1$ spaces, and we color the two copies of the polynomial $a(x)$ to guide the reader,
    \begin{eqnarray}
        A_1 = 
        \begin{pmatrix}
            \textcolor{blue}{a_0} & \textcolor{blue}{a_{\ell-1}} & \dots & \textcolor{blue}{a_1} \\ 
            \textcolor{blue}{a_1} & \textcolor{blue}{a_0} & \dots & \textcolor{blue}{a_2} \\
            \dots & \dots & \dots & \dots \\
            \textcolor{blue}{a_{\ell-1}} & \textcolor{blue}{a_{\ell-2}} & \dots & \textcolor{blue}{a_0}
        \end{pmatrix}
        \mapsto 
        \tiny
        \begin{pmatrix}
            \textcolor{blue}{a_0} & 0 & \dots & 0 &\hspace{5pt} 0 & \textcolor{red}{a_{\ell-1}} & \dots & \textcolor{red}{a_1} &\hspace{5pt} \textcolor{red}{a_0} & \textcolor{blue}{a_{\ell-1}} & \dots & \textcolor{blue}{a_1} \\ 
            \textcolor{blue}{a_1} & \textcolor{blue}{a_0} & \dots & 0 &\hspace{5pt} 0 & 0 & \dots & \textcolor{red}{a_2} &\hspace{5pt} \textcolor{red}{a_1} & \textcolor{red}{a_0} & \dots & \textcolor{blue}{a_2} \\
            \dots & \dots & \dots & \dots &\hspace{5pt} \dots & \dots & \dots & \dots &\hspace{5pt} \dots & \dots & \dots & \dots \\
            \textcolor{blue}{a_{\ell-1}} & \textcolor{blue}{a_{\ell-2}} & \dots & \textcolor{blue}{a_0} &\hspace{5pt} 0 & 0 & \dots & 0 &\hspace{5pt} \textcolor{red}{a_{\ell-1}} & \textcolor{red}{a_{\ell-2}} & \dots & \textcolor{red}{a_0}
            \\[10pt]
            \textcolor{red}{a_0} & \textcolor{blue}{a_{\ell-1}} & \dots & \textcolor{blue}{a_1} &\hspace{5pt} \textcolor{blue}{a_0} & 0 & \dots & 0 &\hspace{5pt} 0 & \textcolor{red}{a_{\ell-1}} & \dots & \textcolor{red}{a_1} \\ 
            \textcolor{red}{a_1} & \textcolor{red}{a_0} & \dots & \textcolor{blue}{a_2} &\hspace{5pt} \textcolor{blue}{a_1} & \textcolor{blue}{a_0} & \dots & 0 &\hspace{5pt} 0 & 0 & \dots & \textcolor{red}{a_2} \\
            \dots & \dots & \dots & \dots &\hspace{5pt} \dots & \dots & \dots & \dots &\hspace{5pt} \dots & \dots & \dots & \dots \\
            \textcolor{red}{a_{\ell-1}} & \textcolor{red}{a_{\ell-2}} & \dots & \textcolor{red}{a_0}  &\hspace{5pt} \textcolor{blue}{a_{\ell-1}} & \textcolor{blue}{a_{\ell-2}} & \dots & \textcolor{blue}{a_0} &\hspace{5pt} 0 & 0 & \dots & 0
            \\[10pt]
            0 & \textcolor{red}{a_{\ell-1}} & \dots & \textcolor{red}{a_1} &\hspace{5pt} \textcolor{red}{a_0} & \textcolor{blue}{a_{\ell-1}} & \dots & \textcolor{blue}{a_1} &\hspace{5pt} \textcolor{blue}{a_0} & 0 & \dots & 0 \\ 
            0 & 0 & \dots & \textcolor{red}{a_2} &\hspace{5pt} \textcolor{red}{a_1} & \textcolor{red}{a_0} & \dots & \textcolor{blue}{a_2} &\hspace{5pt} \textcolor{blue}{a_1} & \textcolor{blue}{a_0} & \dots & 0 \\
            \dots & \dots& \dots & \dots &\hspace{5pt} \dots & \dots & \dots & \dots &\hspace{5pt} \dots & \dots & \dots & \dots \\
            0 & 0 & \dots & 0 &\hspace{5pt} \textcolor{red}{a_{\ell-1}} & \textcolor{red}{a_{\ell-2}} & \dots & \textcolor{red}{a_0} &\hspace{5pt} \textcolor{blue}{a_{\ell-1}} & \textcolor{blue}{a_{\ell-2}} & \dots & \textcolor{blue}{a_0}
        \end{pmatrix}.
    \end{eqnarray}Now, it is clear that the last matrix is equal to
    \begin{eqnarray}
        \begin{pmatrix} L(A_1) & U(A_1) & A_1 \\ A_1 & L(A_1) & U(A_1) \\ U(A_1) & A_1 & L(A_1) \end{pmatrix} = A_2. \nonumber
        \nonumber
    \end{eqnarray}
    The final matrix is written in block form with three blocks equal to $A_1$, three blocks equal to its lower triangular form $L(A_1)$ and three blocks equal to its upper triangular form $U(A_1)$.
    Matrix $B_2$ takes a completely analogous form in terms of $B_1$ due to the mapping $\textcolor{blue}{b(x)} \mapsto b^{(2)}(x) = \textcolor{blue}{b(x)} + x^{\ell_1}\textcolor{red}{b(x)} \in \FF_2^{\langle 3\ell \rangle}$.
    
    For any $m \geq 2$, we obtain the generating polynomial $a^{(m+1)}(x)$ by
    \begin{equation}
        a^{(m)}(x) \mapsto a^{(m+1)}(x) = p^{(m)}(x)a(x) = a^{(m)}(x) + x^{\ell_{m}} a^{(m)}(x),
    \end{equation}
    Since $\ell_m = 3\ell_{m-1}$, this is equivalent to the mapping
    \begin{align}
        A_m \mapsto A_{m+1} = \begin{pmatrix} L_m & U_m & A_m \\ A_m & L_m & U_m \\ U_m & A_m & L_m \end{pmatrix} = F(A_m) \,,
    \end{align}
    where $L_m \coloneqq L(A_m)$ and $U_m \coloneqq U(A_m)$ are respectively the lower and upper triangular matrices obtained from $A_m$.
    We can set $b^{(m)}(x) \mapsto b^{(m+1)}(x)$ and $B_m \mapsto B_{m+1}$ analogously.
    We have thus shown statement~\ref{en:Hm} of the theorem.
    
    \item Matrices $A_m, B_m$ are both $(\ell_m \times \ell_m)$ matrices of row and column weight equal to $2^{m-1}\hspace{1pt}\wt a(x)$ and $2^{m-1}\hspace{1pt}\wt b(x)$ respectively. 
    The weights double at each iterations as we are concatenating two copies of the $m$-th generating polynomials $a^{(m)}(x), b^{(m)}(x)$ to construct the $(m+1)$-th circulant matrices $A_{m+1}, B_{m+1}$.
    Therefore, the $(2\ell_m \times 2n_m)$ parity-check matrix  \begin{equation}
        H_m \coloneqq \begin{pmatrix} (A_m|B_m) & 0 \\ 0 & (B_m^T|A_m^T) \end{pmatrix},
    \end{equation}
    
    has row weight $w_r(m) \coloneqq 2^{m-1}\hspace{1pt} ( \wt a(x) + \wt b(x) )$ and column weight $w_c(m) \coloneqq 2^{m-1}\hspace{1pt} \max\{\wt a(x), \wt b(x)\}$.
    To calculate the sparsity of $H_m$ according to Definition~\ref{def:sparsity}, we observe that
    \begin{align}
        \frac{w_r(m)}{n_m} = \frac{2^{m-1} ( \wt a(x) + \wt b(x) )}{2 \cdot 3^{m-1} \ell_1} \leq \frac{2^{m-1} \max\{ \wt a(x), \wt b(x) \}}{3^{m-1} \ell_1} = \frac{w_c(m)}{\ell_m},
    \end{align}
    where the inequality follows from the fact that $\max\{ c_1, c_2 \} \geq \frac{c_1 + c_2}{2}$ for any real numbers $c_1,c_2$.
    It is therefore evident that $\F$ is a family of $q(m)$--sparse codes, where $q(m) = \frac{w_c(m)}{\ell_m}$, proving statement~\ref{en:sparsity}.
    
    \item To prove that $\F$ is scalable according to Definition~\ref{def:scalability}, we need to identify an appropriate qubit relabelling such that $H_m$ is exactly embedded in $H_{m+1}$.
    
    We first introduce some simplified notation.
    Denote by $H_X, H_Z$ the $X$ and $Z$ parts of $H_m$ respectively and denote by $H'_X, H'_Z$ the $X$ and $Z$ parts of $H_{m+1}$ respectively.
    Further let $L_m \coloneqq L(A_m)$, $U_m \coloneqq U(A_m)$ and $L'_m \coloneqq L(B_m)$ and $U'_m \coloneqq U(B_m)$.
    
    We begin by showing how to embed $H_X = (A_m|B_m)$ into $H'_X = (A_{m+1}|B_{m+1})$.
    By writing out $H'_X$ explicitly, we can follow a sequence of qubit relabellings to get
    \begin{align}
        H_X' &= 
        \begin{pmatrix} 
            L_m & U_m & A_m & L'_m & U'_m & B_m \\ 
            A_m & L_m & U_m & B_m & L'_m & U'_m \\ 
            U_m & A_m & L_m & U'_m & B_m & L'_m 
        \end{pmatrix}
        \mapsto 
        \begin{pmatrix} 
            A_m & B_m & L_m & L'_m & U_m & U'_m \\ 
            U_m & U'_m & A_m & B_m & L_m & L'_m \\ 
            L_m & L'_m & U_m & U'_m & A_m & B_m
        \end{pmatrix}
        \nonumber\\
        &= 
        \begin{pmatrix} 
            (A_m | B_m) & \big(L_m | L'_m\big) & \big(U_m | U'_m\big) \\ 
            \big(U_m | U'_m\big) & (A_m | B_m) & \big(L_m | L'_m\big) \\ 
            \big(L_m | L'_m\big)& \big(U_m | U'_m\big) & (A_m | B_m)
        \end{pmatrix}
        .
    \end{align}
    The first step follows by relabelling the block columns according to the permutation
    $\bigl(\begin{smallmatrix}
  1 & 2 & 3 & 4 & 5 & 6 \\
  3 & 5 & 1 & 4 & 6 & 2
\end{smallmatrix}\bigr),$
    which corresponds to a data qubit relabelling.
    Now $H_X$ is embedded in the top left corner of $H'_X$ satisfying the condition of Definition~\ref{def:scalability} for scalability.
    
    We can repeat this process to ensure that $H_Z = (B_m^T|A_m^T)$ is embedded in $H_Z' = (B_{m+1}^T|A_{m+1}^T)$, 
    \begin{align}
        H_Z' = 
        \begin{pmatrix} 
            L^{\prime T}_m & B_m^T & U^{\prime T}_m & L_m^T & A_m^T & U_m^T \\ 
            U^{\prime T}_m & L^{\prime T}_m & B_m^T & U_m^T & L_m^T & A_m^T \\ 
            B_m^T & U^{\prime T}_m & L^{\prime T}_m & A_m^T & U_m^T & L_m^T
        \end{pmatrix}
        \mapsto 
        \begin{pmatrix} 
            (B_m^T | A_m^T) & \big(L^{\prime T}_m | L_m^T\big) & \big(U^{\prime T}_m | U_m^T\big) \\ 
            \big(U^{\prime T}_m | U_m^T\big) & (B_m^T | A_m^T) & \big(L^{\prime T}_m | L_m^T\big) \\ 
            \big(L^{\prime T}_m | L_m^T\big) & \big(U^{\prime T}_m | U_m^T\big) & (B_m^T | A_m^T)
        \end{pmatrix}
        .
    \end{align}
    where now the qubit relabelling corresponds to the permutation $\bigl(\begin{smallmatrix}
  1 & 2 & 3 & 4 & 5 & 6 \\
  3 & 1 & 5 & 4 & 2 & 6
\end{smallmatrix}\bigr)$ of block columns in $H_Z'$ followed by swapping the second and third rows of blocks.
    \end{enumerate}
\end{proof}

\subsection{Proof of Theorem~\ref{thm:scalable2}}
\label{app:scalability2}

\secondmainclaim*
\begin{proof}
    First of all, we need to show the existence of an integer and a polynomial sequence that can be used to construct this family of codes following Algorithm~\ref{alg:gb_construction} with $\Q_1$ as the base code. To this end, consider the integer sequence $(\kappa_1, \kappa_2, \dots)$ given by 
    \begin{equation}
        \kappa_m = \frac{\ell + r\cdot m}{\ell} \,,
    \end{equation}
    so that $\ell_m = \kappa_m\ell_1=\ell + r\cdot m$, for $m = 1,\dots,M$.
    Let $a(x)=f^{(a)}(x) + g^{(a)}(x)$ and $b(x)=f^{(b)}(x) + g^{(b)}(x)$, with $f^{(a)}(x)=\sum_{t=0}^{j - 1}a_t x^t$, $g^{(a)}(x) = \sum_{t=j}^{\ell - 1}a_t x^t$, $f^{(b)}(x)=\sum_{t=0}^{j - 1}b_t x^t$, and $g^{(b)}(x) = \sum_{t=j}^{\ell - 1}b_t x^t$,  be polynomials isomorphic to the matrices $A_1$ and $B_1$, respectively. We then define a polynomial sequence $(p^{(1)}, p^{(2)}, \ldots, p^{(M)})$ by imposing the following set of equalities
    \begin{eqnarray}\label{eq:pm_scalable2}
        \sum_{t=0}^s \gamma_t p_{s-t}^{(m)} &=& \gamma_s, \text{ for } 0\leq s\leq j-1,\nonumber\\
        \sum_{t=0}^s \gamma_t p_{s-t}^{(m)} &=& 0, \text{ for } j\leq s\leq j+r\cdot m-1,\nonumber\\
        \sum_{t=0}^s \gamma_t p_{s-t}^{(m)} &=& \gamma_s, \text{ for } j+r\cdot m\leq s\leq \ell + r\cdot m-1,
    \end{eqnarray}for $\gamma = a, b$ and $m = 1,\dots,M$, where $p^{(m)}(x)=\sum_{t=0}^{r\cdot m} p_t^{(m)}x^t$. From construction, it is clear that $\deg a + \deg p^{(m)}, \deg b + \deg p^{(m)} < \ell + r\cdot m$.
    Similarly to the previous theorem, each code $\Q_m \in \F$ is therefore a GB code obtained from the construction of Algorithm~\ref{alg:gb_construction}.
    Proceeding to the proof of the above statements of the current theorem, we have the following:
    
    \begin{enumerate}
    \item Each code $\Q_m \in \F$ is a $[[2\ell_m, k_m]]$ code, where
    \begin{equation}\label{eq:ellm_scalable}
        \ell_{m+1} = \ell_m + r,
    \end{equation}
    for all $m \geq 1$.
    Therefore, $n_{m+1} = n_m+2r$. 
    The bound on the code dimension is obtain from Theorem~\ref{thm:guarantee}.
    
    \item In the following, we are going to evaluate the action of the proposed polynomial $p^{(m)}$ given in \eqref{eq:pm_scalable2} on the polynomials $a(x), b(x)$ defining the code generating polynomials $a^{(m)},b^{(m)}$.
    As before, we have that $a^{(m)} \sim A_m$ and $b^{(m)} \sim B_m$ for the $m$-th code $\Q_m$, and $a(x), b(x)$ denote $a^{(1)}(x), b^{(1)}(x),$ respectively.
    
    For $m=1$, we retrieve the base code $\Q_1$ with parity-check matrix
    \begin{equation}
        H_1 \coloneqq \begin{pmatrix} (A_1|B_1) & 0 \\ 0 & (B_1^T|A_1^T) \end{pmatrix}.
    \end{equation}
    Moreover, we see that $p^{(m)}(x)$ maps $\textcolor{blue}{a(x)} \mapsto a^{(m)}(x) = f^{(a)}(x) + x^{r\cdot m}g^{(a)}(x)$, therefore it creates a new polynomial of the same weight as $a(x)$. Describing the new matrix $A_m$ by the elements of the matrix $A_1$, we have
    \begin{eqnarray}\label{Eq:mapstothirdconstruction}
        \hspace{-.8cm}A_1 = 
        \begin{pmatrix}
            \textcolor{blue}{a_0}       & \cdots & \textcolor{blue}{a_{j-1}}  & \cdots & \textcolor{blue}{a_{1}}\\ 
            \dots                       & \dots  & \dots                      & \dots  & \dots\\
            \textcolor{blue}{a_{j-1}}   & \cdots & \textcolor{red}{a_{2j-1}}  & \cdots & \textcolor{blue}{a_{j-2}}\\ 
            \textcolor{red}{a_{j}}      & \cdots & \textcolor{red}{a_{2j}}    & \cdots & \textcolor{blue}{a_{j-1}}\\ 
            \dots                       & \dots  & \dots                      & \dots  & \dots\\
            \textcolor{red}{a_{\ell-1}} & \cdots & \textcolor{blue}{a_{j-2}}  & \cdots & \textcolor{blue}{a_{0}}
        \end{pmatrix}
        \mapsto 
        \tiny
        \begin{pmatrix}
            \textcolor{blue}{a_0} & \textcolor{red}{a_{\ell-1}} & \dots & \textcolor{red}{a_{\ell - j + 1}} & \textcolor{red}{a_{\ell - j}} & \dots & \textcolor{red}{a_{j}} & 0 & \dots & \textcolor{blue}{a_1} \\ 
            \textcolor{blue}{a_1} & \textcolor{blue}{a_0} & \dots & \textcolor{red}{a_{\ell - j + 2 }} & \textcolor{red}{a_{\ell - j + 1}} & \dots & \textcolor{red}{a_{j + 1}} & \textcolor{red}{a_j} & \dots & \textcolor{blue}{a_2} \\
            \dots & \dots & \dots & \dots & \dots & \dots & \dots & \dots & \dots & \dots \\
            \textcolor{blue}{a_{j-1}} & \textcolor{blue}{a_{j-2}} & \dots & \textcolor{blue}{a_0} & \textcolor{red}{a_{\ell-1}} & \dots & \textcolor{red}{a_{2j - 1}} & \textcolor{red}{a_{2j - 2}} & \dots & \textcolor{blue}{a_{j-1}} \\
            0 & \textcolor{blue}{a_{j-1}} & \dots & \textcolor{blue}{a_1} & \textcolor{blue}{a_0} & \dots & \textcolor{red}{a_{2j}} & \textcolor{red}{a_{2j-1}} & \dots & 0 \\ 
            0 & 0 & \dots & \textcolor{blue}{a_2} & \textcolor{blue}{a_1} & \dots & \textcolor{red}{a_{2j+1}} & \textcolor{red}{a_{2j}} & \dots & 0 \\
            \dots & \dots & \dots & \dots & \dots & \dots & \dots & \dots & \dots & \dots \\
            0 & 0 & \dots & 0  & \textcolor{blue}{a_{j-1}} & \dots & \textcolor{blue}{a_0} & \textcolor{red}{a_{\ell - 1}} & \dots & \textcolor{red}{a_{j}}\\
            \textcolor{red}{a_j} & 0 & \dots & 0 & 0 & \dots & \textcolor{blue}{a_1} & \textcolor{blue}{a_0} & \dots & \textcolor{red}{a_{j+1}} \\ 
            \textcolor{red}{a_{j+1}} & \textcolor{red}{a_{j}} & \dots & 0 & 0 & \dots & \textcolor{blue}{a_2} & \textcolor{blue}{a_1} & \dots & \textcolor{red}{a_{j+2}} \\
            \dots & \dots& \dots & \dots & \dots & \dots & \dots & \dots & \dots & \dots \\
            \textcolor{red}{a_{\ell-1}} & \textcolor{red}{a_{\ell-2}} & \dots & \textcolor{red}{a_{\ell - j}} & \textcolor{red}{a_{\ell- j - 1}} & \dots & 0 & 0 & \dots & \textcolor{blue}{a_0}
        \end{pmatrix},
    \end{eqnarray}where the elements $a_0,\ldots,a_{j-1}$ have been colored in blue and $a_j,\ldots,a_{\ell-1}$ in red to guide the reader. Now, it is clear that the last matrix is equal to Eq.~(\ref{eq:pc_inductive2}). The same argument can be used for the matrices $B_m$, for $m=1,\ldots, M$.
    
    \item Matrices $A_m$ and $B_m$ are both $(\ell_m \times \ell_m)$ matrices with row and column weights equal to $\wt a(x)$ and $\wt b(x)$, respectively. Observe that these weights are kept the same for any code element in the family. Therefore, the $(2\ell_m \times 2n_m)$ parity-check matrix
    \begin{equation}
        H_m \coloneqq \begin{pmatrix} (A_m|B_m) & 0 \\ 0 & (B_m^T|A_m^T) \end{pmatrix},
    \end{equation} has row weight $w_r(m) = \wt a(x) + \wt b(x)$ and column weight $w_c(m) = \max\{\wt a(x), \wt b(x)\}$. Since both weights do not depend on which element of the family we are considering and $\wt a(x) + \wt b(x) > \max\{\wt a(x), \wt b(x)\}$, it is clear that $\F$ is a family of $t$-qLDPC codes with $t = \wt a(x) + \wt b(x).$
    \end{enumerate}
\end{proof}

\section{Numerical performance of codes}
\label{app:numerics}

\begin{figure}[t]
    \centering 
    \includegraphics[width=1\columnwidth]{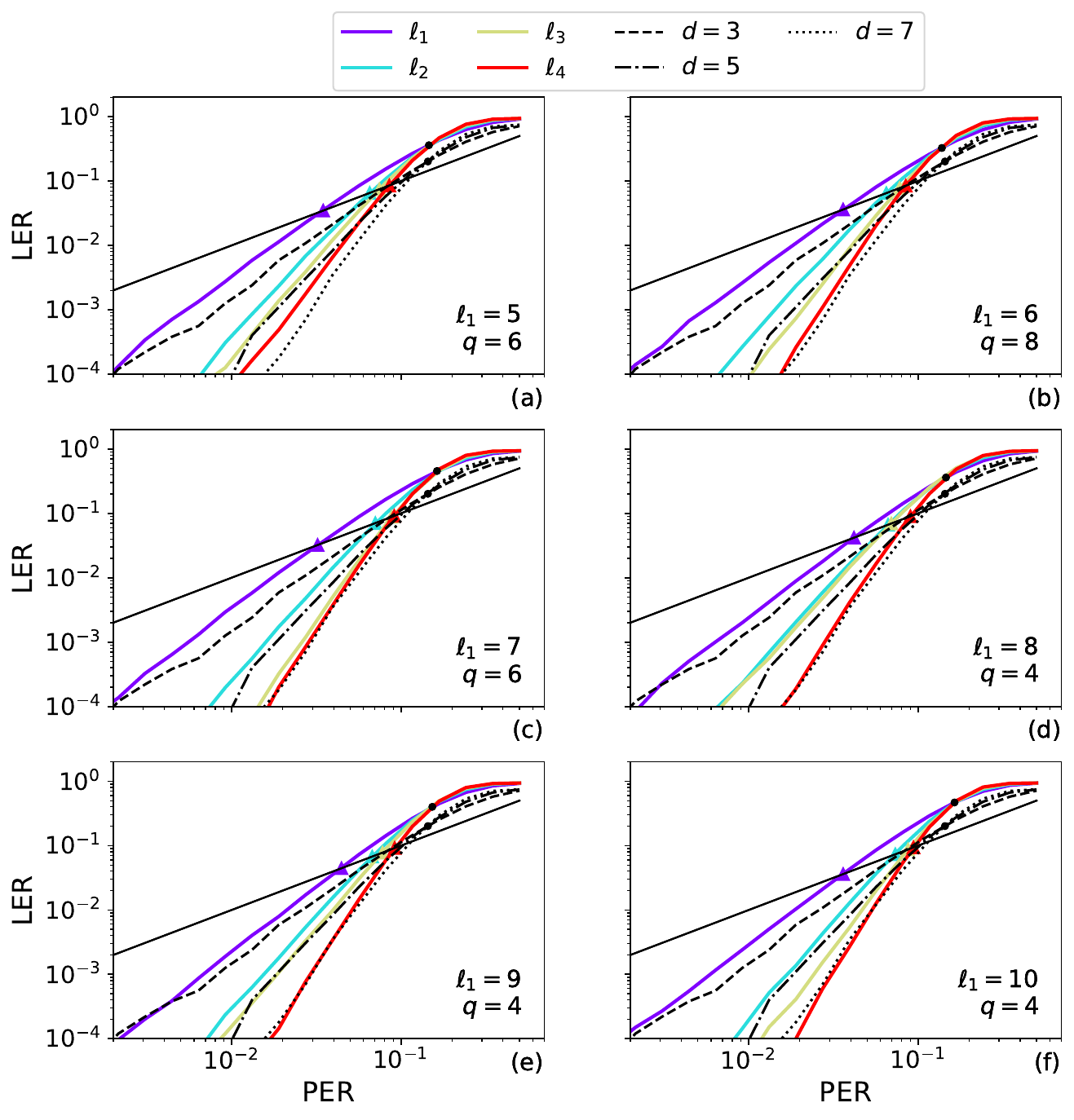}
    \caption{ {\bf qLDPC families for different base code lengths.}
    Each plot displays a qLDPC code family with given $[[\ell_1, k_1]]$ base code $\ell_1$ for $\ell_1 = 5,6,7,8,9,10$.
    The base code is defined by polynomials $a(x),b(x) \in \FF_2^{\langle \ell_1 \rangle}$ provided in the text and the code dimension is given by $k_1 = 2\hspace{2pt}\deg g(x)$ according to Proposition~\ref{prop:k}.
    For each code family, we plot the smallest four codes, showing the breakeven point of each code and the threshold point. 
    We compare them with the surface code of distances $3,5,7$.
    }
    \label{fig:appendix}
\end{figure}

Here we present qLDPC codes of comparable performance for all $\ell_1 = 5, \dots, 10$.
For a given pair of generators $a(x),b(x) \in \FF_2^{\langle \ell_1 \rangle}$, we construct the base code and extend it according to Algorithm~\ref{alg:gb_construction} using the integer sequence $\kappa_m = m$ and $p^{(m)} = 1$ for $m \geq 1$.
Therefore, we obtain small qLDPC codes, which we plot in \figref{fig:appendix} up to $m=4$.
We identify each code family by the size $\ell_1$ of its base code. 
The six base codes that we use to create the families in \figref{fig:appendix} are:
\begin{enumerate}
    \item[(a)] $[[10,2,3]]$ code $(\ell_1 = 5)$; plotted in \figref{fig:threshold} too.
    \begin{align*}
        a(x) = 1 + x^4 \,; \quad
        b(x) = 1 + x + x^2 + x^4 \,.
    \end{align*}
    \item[(b)] $[[12,2,3]]$ code $(\ell_1 = 6)$.
    \begin{align*}
        a(x) = 1 + x + x^2 + x^5 \,; \quad
        b(x) = 1 + x + x^3 + x^5 \,.
    \end{align*}
    \item[(c)] $[[14,2]]$ code $(\ell_1 = 7)$.
    \begin{align*}
        a(x) = 1 + x^3 \,; \quad
        b(x) = 1 + x + x^3 + x^6 \,.
    \end{align*}
    \item[(d)] $[[16,2]]$ code $(\ell_1 = 8)$.
    \begin{align*}
        a(x) = x + x^3 \,; \quad
        b(x) = 1 + x^5 \,.
    \end{align*}
    \item[(e)] $[[18,2]]$ code $(\ell_1 = 9)$.
    \begin{align*}
        a(x) = 1 + x^2 \,; \quad
        b(x) = 1 + x^5 \,.
    \end{align*}
    \item[(f)] $[[20,2]]$ code $(\ell_1 = 10)$.
    \begin{align*}
        a(x) = 1 + x \,; \quad
        b(x) = 1 + x^6 \,.
    \end{align*}
\end{enumerate}

Note that all base codes we have selected have $g(x) = 1 + x$, thereby code dimension $k_1 = 2$.

In the plots, we include the threshold point of each code family along with the breakeven points, i.e.~the point at which LER dips below PER for a given code. 
We observe that the breakeven points increase as $m$ increases for each code family, and the threshold of each family is similar to the surface code threshold.
Bigger codes, i.e.~higher values of $\ell_1$, allow for a small number of qubits per parity check $q \coloneqq \wt a(x) + \wt b(x)$.

\end{document}